\newtheorem{theorem}{Theorem}
\newtheorem{lemma}{Lemma}
\newtheorem{definition}{Definition}
\newtheorem{remark}{Remark}
\newcommand{\mR}{\mathbb{R}}
\newcommand{\mC}{\mathbb{C}}
\newcommand{\mN}{\mathbb{N}}
\newcommand{\mE}{\mathbb{E}}
\newcommand{\mP}{\mathbb{P}}
\newcommand{\mQ}{\mathbb{Q}}
\newcommand{\mV}{\mathbb{V}}
\newcommand{\mW}{\mathbb{W}}
\newcommand{\cH}{\mathcal{H}}
\newcommand{\cP}{\mathcal{P}}
\newcommand{\ux}{\underline{x}}
\newcommand{\uxb}{\underline{x \grave{}}}
\newcommand{\pj}{\partial_{x_j}}
\newcommand{\pjb}{\partial_{{x \grave{}}_{j}}}
\newcommand{\gso}{\mathfrak{so}}
\newcommand{\gsp}{\mathfrak{sp}}
\title{Spherical harmonics and integration in superspace II}
\author{H.\ De Bie\thanks{Corresponding author} \thanks{Ph.D. Fellow of the Research Foundation - Flanders (FWO), E-mail: {\tt Hendrik.DeBie@UGent.be}} \and  D.\ Eelbode\thanks{E-mail: {\tt 	David.Eelbode@ua.ac.be}} \and F.\ Sommen\thanks{E-mail: {\tt fs@cage.ugent.be}}
}
\date{\small{H. De Bie and F. Sommen: Clifford Research Group -- Department of Mathematical Analysis}\\
\small{Faculty of Engineering -- Ghent University\\ Krijgslaan 281, 9000 Gent,
Belgium}\\ \vspace{5mm}
\small{D. Eelbode: Department of Mathematics and Computer Science\\
University of Antwerp\\
Middelheimlaan 1, 2020 Antwerpen, Belgium}
}
\begin{document}
\maketitle
\begin{abstract}
The study of spherical harmonics in superspace, introduced in [J. Phys. A: Math. Theor. 40 (2007) 7193-7212], is further elaborated. A detailed description of spherical harmonics of degree $k$ is given in terms of bosonic and fermionic pieces, which also determines the irreducible pieces under the action of $SO(m) \times Sp(2n)$. In the second part of the paper, this decomposition is used to describe all possible integrations over the supersphere. It is then shown that only one possibility yields the orthogonality of spherical harmonics of different degree. This is the so-called Pizzetti-integral of which it was shown in [J. Phys. A: Math. Theor. 40 (2007) 7193-7212] that it leads to the Berezin integral.
\end{abstract}
\textbf{Keywords :} superspace, spherical harmonics, Pizzetti formula, Berezin integral
\\
\noindent
\textbf{PACS numbers :} 02.30.Px, 02.30.Gp, 02.30.Fn 
\\
\noindent
\textbf{Mathematics Subject ClassiÞcation :} 30G35, 58C50

\newpage

\section{Introduction}

This paper deals with the study of superspaces, from a function theoretical point of view. Superspaces, which play an important role in modern theoretical physics, can be considered as co-ordinate spaces for theories exhibiting supersymmetry: they are spaces in which both commuting and anti-commuting variables are considered, corresponding to respectively bosonic and fermionic degrees of freedom.\\
\noindent 
Mathematically speaking, there are several different approaches to this theory: the two most important ones are based on methods from algebraic geometry, see e.g. \cite{MR0208930,MR732126,MR0580292,MR565567}, or methods from differential geometry, see e.g. \cite{MR778559,MR574696}. Recently however, we have started investigating superspaces from yet another point of view, focussing on close connections with the framework of classical harmonic analysis and Clifford analysis. The latter is a higher-dimensional function theory in which spin-invariant differential operators, such as the Dirac operator or the Rarita-Schwinger operator, are studied (see e.g. \cite{MR697564,MR1169463, BSSVL}). This connection essentially reveals itself in the existence of certain differential operators, acting between functions defined on superspaces and establishing a realization of either the simple Lie algebra $\mathfrak{sl}_2$, or the Lie superalgebra $\mathfrak{osp}(1 \vert 2)$. The basic framework underlying these observations was obtained in \cite{DBS1,DBS4,DBS2}, and in \cite{DBS5} we applied these results to construct an integral over the supersphere. This integral, which was inspired by an old result \cite{PIZZETTI} of Pizzetti, allowed for the generalization to superspace of several classical interesting results such as orthogonality of spherical harmonics of different degree, Green's theorem, the Funk-Hecke theorem, etc. Combining this integral over the supersphere with a generalization of integration in spherical co-ordinates, we were not only able to construct an integral over the whole superspace but also to prove the equivalence with the Berezin integral (see \cite{MR732126}). \\
\noindent
In the present paper, we will show that the construction of this integral can also be done using the crucial fact that the differential operators leading to a realization of $\mathfrak{sl}(2)$ in superspace are invariant with respect to a suitable action of the group $SO(m) \times Sp(2n)$. This can then not only be seen as an alternative way to justify the definitions and constructions obtained in \cite{DBS5}, which were based on a formal analogy with the classical case, but also as a means to give a precise mathematical meaning to the Berezin integral, based on the symmetry underlying our theoretical framework. In order to achieve this, we will investigate how spaces of spherical harmonics in superspace behave under the regular representation of the aforementioned group, and decompose these spaces into irreducible subspaces. Using the fact that an integral over the supersphere (which is invariant under the regular group action and satisfies a natural constraint inspired by the algebraic equation for the supersphere) can be seen as a linear functional, we will then be able to invoke Schur's Lemma in order to obtain the most general definition of an integral over the supersphere. Surprisingly enough, this will lead to several possibilities spanning a finite-dimensional vectorspace, but the natural constraint that spherical harmonics of different degree be orthogonal will lead to a unique definition. This will then precisely be the analogue of Pizzetti's formula, i.e. the Berezin integral in superspace. Note that other attempts to relate the Berezin integral to more familiar types of integration have been made in e.g. \cite{MR784620} and \cite{MR825156}.\\
\noindent
The paper is organized as follows. In section 2 we recall the basic theory of harmonic analysis in superspace necessary for this paper. In section 3 we discuss the action of $SO(m) \times Sp(2n)$, followed in section 4 by a discussion of the purely bosonic and purely fermionic case. Next, in section 5 we construct projection operators selecting harmonic components out of spaces of homogeneous polynomials.  In section 6 we give the general decomposition of spaces of spherical harmonics under the group action. In section 7 we define integration over the supersphere and show how this definition already restricts possible integrations to a finite dimensional vectorspace. In section 8 we give some examples of integrals over the supersphere. In the last section we discuss how we can distinguish between these different types of integration, thus obtaining a unique determination of the Berezin integral.

\section{Harmonic analysis in superspace}

We consider the algebra $\cP$, generated by $m$ commuting variables $x_i$ and $2n$ anti-commuting variables ${x\grave{}}_i$, subject to the following relations
\[
\begin{array}{l}
x_i x_j =  x_j x_i\\
{x \grave{}}_i {x \grave{}}_j =  - {x \grave{}}_j {x \grave{}}_i\\
x_i {x \grave{}}_j =  {x \grave{}}_j x_i.\\
\end{array}
\]
In other words, we have that
\[
\cP = \mR[x_1, \ldots , x_m] \otimes \Lambda_{2n}
\]
with $\Lambda_{2n}$ the Grassmann algebra generated by the ${x \grave{}}_i$.
We will equip this algebra with the necessary operators to obtain a representation of harmonic analysis in superspace.

First of all, we introduce a super Laplace operator by putting
\[
\Delta  = 4 \sum_{j=1}^n \partial_{{x \grave{}}_{2j-1}} \partial_{{x \grave{}}_{2j}}-\sum_{j=1}^{m} \pj^2
\]
and similarly a bosonic, resp. fermionic Laplace operator $\Delta_b$, $\Delta_f$ by
\[
\begin{array}{lll}
\Delta_b&=&-\sum_{j=1}^{m} \pj^2 \\
\vspace{-2mm}\\
\Delta_f&=& 4 \sum_{j=1}^n \partial_{{x \grave{}}_{2j-1}} \partial_{{x \grave{}}_{2j}}.
\end{array}
\]

We also have the following generalization of the length squared of a vector in Euclidean space
\[
x^2 = \sum_{j=1}^n {x\grave{}}_{2j-1} {x\grave{}}_{2j}  -  \sum_{j=1}^m x_j^2
\]
with its respective bosonic and fermionic components $\ux^2,\uxb^2$ defined by
\[
\begin{array}{lll}
\ux^2&=&-\sum_{j=1}^m x_j^2 \\
\vspace{-2mm}\\
\uxb^2&=& \sum_{j=1}^n {x\grave{}}_{2j-1} {x\grave{}}_{2j}.
\end{array}
\]

Finally we introduce the super-Euler operator by
\[
\mE=\sum_{j=1}^m x_j \pj +\sum_{j=1}^{2n} {x \grave{}}_j \pjb
\]
allowing us to put a grading on $\cP$, viz.
\[
\cP = \bigoplus_{k=0}^{\infty} \cP_k, \quad \cP_k=\left\{ \omega \in \cP \; | \; \mE \omega=k \omega \right\}.
\]

Now let us calculate the following:
\[
\Delta(x^2) = 2 (m-2n) = 2M
\]
where $M$ is the so-called super-dimension. This parameter will play an important role in the sequel. It also has a nice physical interpretation, see \cite{DBS3}.

Putting $X = x^2/2$, $Y=-\Delta/2$ and $H = \mE + M/2$, we can calculate the following commutators:
\begin{eqnarray}
[H,X] &=& 2X\\
\left[H,Y\right] &=& -2Y\\
\left[X,Y\right] &=& H
\label{lierelations}
\end{eqnarray}
proving that $X$, $Y$ and $H$ span the Lie algebra $\mathfrak{sl}_2$ and that we have indeed a representation of harmonic analysis in superspace (see e.g. \cite{MR1151617}). The present framework can also be extended by introducing the super Dirac operator and the super vector variable, thus leading to a representation of Clifford analysis in superspace (see \cite{DBS4}). In that case the basic operators span the Lie superalgebra $\mathfrak{osp}(1|2)$, of which the even part is $\mathfrak{sl}_2$.

Now we have the following basic lemma.
 
\begin{lemma}
For $R_k \in \cP_k$ one has the following relation:
\begin{equation}
\Delta (x^{2t} R_{k})= 2t(2k+M+2t-2) x^{2t-2} R_k + x^{2t} \Delta R_k.
\label{basicformula}
\end{equation}
\label{iteratedlaplace}
\end{lemma}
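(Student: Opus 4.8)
The plan is to prove \eqref{basicformula} by induction on $t$, using the $\mathfrak{sl}_2$ relations already established. The crucial input is the commutator $[X,Y]=H$: writing $X=x^2/2$, $Y=-\Delta/2$ and $H=\mE+M/2$, this unwinds to the operator identity
\[
\Delta\,x^2-x^2\,\Delta=4\mE+2M
\]
on all of $\cP$. Alongside this I would record that $x^2$ is homogeneous of degree $2$ (check on each summand $x_j^2$ and ${x\grave{}}_{2j-1}{x\grave{}}_{2j}$, or note $[\mE,X]=2X$ from $[H,X]=2X$), so that $x^{2s}R_k\in\cP_{2s+k}$ and hence $\mE(x^{2s}R_k)=(2s+k)\,x^{2s}R_k$ for every $s\ge 0$.

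For $t=0$ the claimed formula is trivial. For $t=1$ one simply applies the displayed operator identity to $R_k$:
\[
\Delta(x^2R_k)=x^2\Delta R_k+(4\mE+2M)R_k=x^2\Delta R_k+(4k+2M)R_k,
\]
which is \eqref{basicformula} at $t=1$. For the inductive step I would assume the formula at $t-1$, write $x^{2t}R_k=x^2\,(x^{2t-2}R_k)$, and apply the operator identity to peel off one factor $x^2$:
\[
\Delta(x^{2t}R_k)=x^2\,\Delta(x^{2t-2}R_k)+(4\mE+2M)(x^{2t-2}R_k).
\]
Then I substitute the induction hypothesis for $\Delta(x^{2t-2}R_k)$ and evaluate $(4\mE+2M)(x^{2t-2}R_k)=\bigl(4(2t-2+k)+2M\bigr)x^{2t-2}R_k$ by homogeneity. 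The term $x^{2t}\Delta R_k$ comes through untouched, and collecting the two contributions proportional to $x^{2t-2}R_k$ gives the coefficient $2(t-1)(2k+M+2t-4)+\bigl(4(2t-2+k)+2M\bigr)$, which simplifies to $2t(2k+M+2t-2)$.

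The only genuine work is this last algebraic simplification, and it is routine; there is no conceptual obstacle, since everything rests on the commutator relations \eqref{lierelations} and on the grading of $\cP$. The one point to be careful about is that $\Delta x^2-x^2\Delta=4\mE+2M$ must be used as an identity of operators on all of $\cP$, so that it may legitimately be applied to the non-harmonic elements $x^{2t-2}R_k$ arising at each stage of the induction; the harmonicity of $R_k$ is never needed and is indeed not assumed in the statement.
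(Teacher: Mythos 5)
Your proof is correct and follows essentially the same route as the paper: the base case is the commutator identity $[\Delta,x^2]=4\mE+2M$ from the $\mathfrak{sl}_2$ relations applied to the homogeneous polynomial $R_k$, and the general case is the same induction on $t$ (which the paper leaves implicit and you carry out explicitly, with the coefficient bookkeeping done correctly).
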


\begin{proof}
Using the formula $[ \Delta,  x^2 ] =  4 \mE + 2 M$ (see (\ref{lierelations})) we obtain
\begin{equation*}
\Delta(x^2 R_k) = (4k+2M) R_k + x^2 \Delta R_k,
\end{equation*}
since $R_k$ is a homogeneous polynomial of degree $k$. The desired result follows by induction.
\end{proof}

As a consequence of this lemma, we have the following
\begin{lemma}
If $R_{2t} \in \cP_{2t}$ then the following holds:
\begin{equation}
\Delta^{t+1}(x^2 R_{2t}) = 4(t+1)(M/2+t) \Delta^{t}( R_{2t}).
\label{relationslaplace}
\end{equation}
\end{lemma}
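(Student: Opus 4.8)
The plan is to iterate the basic formula \eqref{basicformula}. First I would prove, by induction on $j\ge 1$, that for \emph{every} homogeneous $R_k\in\cP_k$ one has
\[
\Delta^{j}(x^{2} R_k)=x^{2}\,\Delta^{j} R_k+2j\,(2k+M+2-2j)\,\Delta^{j-1}R_k .
\]
The case $j=1$ is exactly \eqref{basicformula} with exponent $t=1$ (equivalently, it is the relation $[\Delta,x^{2}]=4\mE+2M$ applied to $R_k$, using $\mE R_k=kR_k$). For the induction step one applies $\Delta$ to the right-hand side of the identity for index $j$: the summand $\Delta^{j-1}R_k$ becomes $\Delta^{j}R_k$, while $\Delta(x^{2}\Delta^{j}R_k)$ is handled by the $j=1$ case once more, now with the homogeneous polynomial $\Delta^{j}R_k\in\cP_{k-2j}$ playing the role of $R_k$. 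Collecting the coefficient of $\Delta^{j}R_k$ produces $2j(2k+M+2-2j)+4(k-2j)+2M$, and a one-line rearrangement checks that this equals $2(j+1)\bigl(2k+M+2-2(j+1)\bigr)$, closing the induction.

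Next I specialize this identity to $k=2t$ and $j=t+1$, which gives
\[
\Delta^{t+1}(x^{2}R_{2t})=x^{2}\,\Delta^{t+1}R_{2t}+2(t+1)(2t+M)\,\Delta^{t}R_{2t}.
\]
The first term on the right vanishes: since $\Delta$ lowers the $\mE$-degree by $2$, we have $\Delta^{t+1}R_{2t}\in\cP_{2t-2(t+1)}=\cP_{-2}=\{0\}$, because $\cP$ is non-negatively graded. Rewriting $2(t+1)(2t+M)=4(t+1)(M/2+t)$ then yields exactly \eqref{relationslaplace}.

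I do not anticipate a real obstacle: the only points that need a little care are the bookkeeping of the coefficient in the induction step and the degree argument showing $\Delta^{t+1}R_{2t}=0$. As an alternative one can avoid the explicit iteration by working in the $\mathfrak{sl}_2$-triple $X=x^{2}/2$, $Y=-\Delta/2$, $H=\mE+M/2$ of \eqref{lierelations}: from $[X,Y]=H$ one obtains $[X,Y^{t+1}]=(t+1)\,Y^{t}(H-t)$, and since $R_{2t}$ is an $H$-eigenvector with eigenvalue $2t+M/2$ and $Y^{t+1}R_{2t}\propto\Delta^{t+1}R_{2t}=0$, evaluating this commutator on $R_{2t}$ gives $Y^{t+1}(XR_{2t})=-(t+1)(t+M/2)\,Y^{t}R_{2t}$, which is \eqref{relationslaplace} once the powers of $-1/2$ are cleared.
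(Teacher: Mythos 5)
Your proof is correct: the induction on $j$ closes because the collected coefficient $2j(2k+M+2-2j)+4(k-2j)+2M$ indeed equals $2(j+1)\bigl(2k+M+2-2(j+1)\bigr)$, and specializing to $k=2t$, $j=t+1$ together with the observation $\Delta^{t+1}R_{2t}\in\cP_{-2}=\{0\}$ gives exactly \eqref{relationslaplace}; the alternative $\mathfrak{sl}_2$ computation via $[X,Y^{t+1}]=(t+1)Y^{t}(H-t)$ also checks out. The paper gives no proof here (it simply cites \cite{DBS5}), and your self-contained iteration of \eqref{basicformula} is precisely the route suggested by the paper's presentation of the lemma as a consequence of Lemma \ref{iteratedlaplace}.
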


\begin{proof}
See \cite{DBS5}.
\end{proof}

Now we define spherical harmonics in superspace as homogeneous null-solutions of the super Laplace operator.
\begin{definition}
A (super)-spherical harmonic of degree $k$ is an element $H_k \in \cP$ satisfying
\begin{eqnarray*}
\Delta H_k&=&0\\
\mE H_k &=& k H_k, \quad \mbox{i.e. } H_k \in \cP_k.
\end{eqnarray*}
The space of spherical harmonics of degree $k$ will be denoted by $\cH_k$.
\end{definition}

In the purely bosonic case (i.e. the case with only commuting variables) we denote $\cH_k$ by $\cH_k^b$, in the purely fermionic case (the case with only anti-commuting variables) by $\cH_k^f$. If necessary for clarity, the variables under consideration will be mentioned also, e.g. $\cH_k^f({x\grave{}}_{1}, \ldots, {x\grave{}}_{2n})$.

Iteration of lemma \ref{iteratedlaplace} immediately leads to the following result.

\begin{lemma}
Let $H_k \in \cH_k$ and $M \not \in -2 \mN$. Then for all $i,j,k \in \mN$ one has that
\[
\Delta^i(x^{2j} H_k) = \left\{
\begin{array}{l}
c_{i,j,k} x^{2j-2i} H_k, \quad i \leq j\\
\vspace{-2mm}\\
0, \quad i > j
\end{array}
\right.
\]
with
\[
c_{i,j,k} = 4^{i} \frac{j!}{(j-i)!} \frac{\Gamma(k+M/2+j)}{\Gamma(k+M/2+j-i)}.
\]
\label{laplonpieces}
\end{lemma}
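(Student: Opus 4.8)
The plan is to prove the statement by induction on $i$, using Lemma~\ref{iteratedlaplace} as the single-step building block. The key observation is that Lemma~\ref{iteratedlaplace} applied with $R_k = H_k$ (so $\Delta R_k = 0$) gives
\[
\Delta(x^{2t} H_k) = 2t(2k+M+2t-2)\, x^{2t-2} H_k,
\]
i.e.\ applying $\Delta$ once to $x^{2t}H_k$ reproduces a scalar multiple of $x^{2t-2}H_k$. Hence iterating $\Delta$ never generates new types of terms: at each stage we stay within the one-dimensional space spanned by $x^{2(j-\ell)}H_k$, and the whole computation reduces to bookkeeping a product of scalars.

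First I would set up the induction. For the base case $i=0$ the claim is trivial with $c_{0,j,k}=1$, matching the stated formula since $\Gamma(k+M/2+j)/\Gamma(k+M/2+j) = 1$ and $j!/j! = 1$. For the inductive step, assuming $\Delta^{i}(x^{2j}H_k) = c_{i,j,k}\, x^{2j-2i}H_k$ for some $i<j$, I would apply $\Delta$ once more, using the single-step formula above with $t = j-i$:
\[
\Delta^{i+1}(x^{2j} H_k) = c_{i,j,k}\, \Delta(x^{2(j-i)} H_k) = c_{i,j,k}\cdot 2(j-i)\bigl(2k+M+2(j-i)-2\bigr)\, x^{2j-2i-2} H_k.
\]
So $c_{i+1,j,k} = c_{i,j,k}\cdot 2(j-i)\bigl(2k+M+2(j-i)-2\bigr) = c_{i,j,k}\cdot 4(j-i)\bigl(k+M/2+j-i-1\bigr)$. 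Unwinding this recursion from $i=0$ gives $c_{i,j,k} = \prod_{\ell=0}^{i-1} 4(j-\ell)\bigl(k+M/2+j-\ell-1\bigr)$, and I would then identify the two resulting products: $\prod_{\ell=0}^{i-1}(j-\ell) = j!/(j-i)!$, and $\prod_{\ell=0}^{i-1}(k+M/2+j-1-\ell) = \Gamma(k+M/2+j)/\Gamma(k+M/2+j-i)$ — the latter being exactly the telescoping/falling-factorial identity for the Gamma function, valid because the hypothesis $M \notin -2\mN$ together with $i\le j$ keeps all the shifted arguments away from the non-positive integers, so no factor in the product vanishes and the Gamma quotient is well defined. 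Collecting the powers of $4$ gives the factor $4^i$.

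For the case $i>j$: once $i=j$ we have $\Delta^{j}(x^{2j}H_k) = c_{j,j,k} H_k$, a polynomial of degree $k$; since $H_k$ is harmonic, $\Delta H_k = 0$, and hence $\Delta^{i}(x^{2j}H_k) = 0$ for all $i>j$. The main (and only real) obstacle is the Gamma-function bookkeeping — making sure the telescoping identity is stated correctly and that the non-degeneracy hypothesis $M \notin -2\mN$ is genuinely what is needed to guarantee none of the factors $k+M/2+j-\ell-1$ vanish for $0 \le \ell \le i-1 \le j-1$; everything else is a routine induction that piggybacks directly on Lemma~\ref{iteratedlaplace}.
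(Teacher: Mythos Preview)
Your proof is correct and is exactly the approach the paper indicates: the paper does not give a separate proof environment for this lemma but simply states that ``Iteration of lemma~\ref{iteratedlaplace} immediately leads to the following result,'' which is precisely your induction on $i$ using the single-step formula $\Delta(x^{2t}H_k)=2t(2k+M+2t-2)x^{2t-2}H_k$. Your bookkeeping of the falling-factorial products and their identification with the Gamma quotients is accurate, and your remark that $M\notin -2\mN$ keeps the Gamma arguments away from the non-positive integers (and the scalar factors nonzero) is the correct reading of that hypothesis.
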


Using this lemma we can prove the following Fischer decomposition of super-polynomials (for the classical case see e.g. \cite{MR0229863}):

\begin{theorem}[Fischer decomposition]
Suppose $M \not \in -2 \mN$. Then $\cP_k$ decomposes as
\begin{equation}
\cP_k = \bigoplus_{i=0}^{\left\lfloor \frac{k}{2} \right\rfloor} x^{2i} \cH_{k-2i}.
\end{equation}
\label{fdecomp}
\end{theorem}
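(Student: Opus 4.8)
The plan is to prove the decomposition $\cP_k = \bigoplus_{i=0}^{\lfloor k/2 \rfloor} x^{2i}\cH_{k-2i}$ in two stages: first that the sum is direct (independence of the summands), and then that the summands span all of $\cP_k$. A dimension count is not available in the naive way here because $\cP_k$ is still infinite in the bosonic directions, so everything must be done by an explicit argument using the operators $\Delta$, $x^2$ and $\mE$ and Lemma~\ref{laplonpieces}.

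\textbf{Directness.} Suppose $\sum_{i=0}^{t} x^{2i} H_{k-2i} = 0$ with $H_{k-2i}\in\cH_{k-2i}$ and $t = \lfloor k/2\rfloor$. I would apply the highest relevant power of the Laplacian, namely $\Delta^{t}$, to this identity. By Lemma~\ref{laplonpieces}, $\Delta^{t}(x^{2i}H_{k-2i}) = 0$ whenever $t > i$, and equals $c_{t,t,k-2t}\, H_{k-2t}$ when $i = t$, where the constant $c_{t,t,k-2t} = 4^{t}\, t!\,\Gamma(k-2t+M/2+t)/\Gamma(k-2t+M/2)$ is nonzero precisely because $M\notin -2\mN$ (this is exactly where the hypothesis is used — one checks none of the Gamma-function arguments hits a non-positive integer). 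Hence $H_{k-2t} = 0$. Then apply $\Delta^{t-1}$ to the remaining identity to kill $H_{k-2(t-1)}$, and continue downward; at each stage the surviving leading term has a nonzero coefficient by the same Gamma-function argument, so every $H_{k-2i}$ vanishes. This proves the sum is direct.

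\textbf{Spanning.} I would argue by induction on $k$. For $k=0,1$ there is nothing to prove since $\cP_k = \cH_k$ (there is no $x^2$ to factor out). For the inductive step, given $R_k\in\cP_k$, I want to split off a harmonic part. Consider the map $\Delta : \cP_k \to \cP_{k-2}$. By induction, $\cP_{k-2} = \bigoplus_{i=0}^{\lfloor (k-2)/2\rfloor} x^{2i}\cH_{k-2-2i}$. The key point is that the restriction of $x^2\cdot : \cP_{k-2}\to\cP_k$ followed by $\Delta$ maps each summand $x^{2i}\cH_{k-2-2i}$ isomorphically onto itself (again by Lemma~\ref{laplonpieces}, $\Delta(x^{2(i+1)}H_{k-2-2i}) = c_{1,i+1,k-2-2i}\,x^{2i}H_{k-2-2i}$ with nonzero constant since $M\notin -2\mN$), so $\Delta\circ(x^2\cdot)$ is invertible on $\cP_{k-2}$. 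Therefore, given $R_k$, set $S_{k-2} := (\Delta\circ x^2)^{-1}\Delta R_k \in \cP_{k-2}$; then $H_k := R_k - x^2 S_{k-2}$ satisfies $\Delta H_k = 0$, i.e. $H_k\in\cH_k$, and $R_k = H_k + x^2 S_{k-2}$. Applying the induction hypothesis to $S_{k-2}\in\cP_{k-2}$ writes $x^2 S_{k-2}$ as a sum of terms $x^{2i}\cH_{k-2i}$ with $i\geq 1$, completing the spanning claim.

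\textbf{Main obstacle.} The only real subtlety is the invertibility of $\Delta\circ(x^2\cdot)$ on $\cP_{k-2}$ — equivalently, the non-vanishing of all the constants $c_{i,j,\ell}$ that appear. One has to verify that under the hypothesis $M\notin -2\mN$ none of the arguments $k-2i+M/2+j$ or $k-2i+M/2$ of the Gamma functions is a non-positive integer for the indices that occur; since $k-2i\geq 0$ and $j\geq 1$ in the relevant ranges, this reduces cleanly to $M/2\notin -\mN$, i.e. the stated condition. Once that bookkeeping is in place, the rest is a direct induction. (This mirrors the classical Fischer decomposition, the novelty being only that $M$ can be negative or zero and one must track the degenerate cases.)
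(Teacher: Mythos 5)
Your argument is correct, and since the paper itself does not prove Theorem~\ref{fdecomp} but simply refers to [DBS5], there is no in-text proof to compare it against; your proof is a legitimate, self-contained version of the classical argument. The directness step (apply $\Delta^t$, then $\Delta^{t-1}$, etc., and use that the constants $c_{i,j,k}$ of Lemma~\ref{laplonpieces} are Pochhammer symbols $\left(k+M/2+j-i\right)_i$ which cannot vanish when $M \not\in -2\mN$) is exactly the standard mechanism, and your spanning step, which amounts to showing $\cP_k = \cH_k \oplus x^2 \cP_{k-2}$ by inverting $\Delta\circ (x^2\cdot)$ summand-by-summand on $\cP_{k-2}$ and then iterating, is sound: the relevant constant $c_{1,i+1,k-2-2i} = 4(i+1)\left(k-i-2+M/2\right)$ is nonzero precisely under the stated hypothesis because $k-i-2 \geq 0$ in the admissible range. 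One small correction to a side remark: your claim that a dimension count is unavailable because ``$\cP_k$ is still infinite in the bosonic directions'' is wrong --- each $\cP_k$ is finite-dimensional (the paper even records $\dim \cP_k = \sum_i \binom{2n}{i}\binom{k-i+m-1}{m-1}$), so an alternative route via $\dim\cH_k = \dim\cP_k - \dim\cP_{k-2}$ plus injectivity of multiplication by $x^2$ would also have been available. This slip does not affect your proof, which never uses the claim, but it should be removed or corrected.
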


\begin{proof}
See \cite{DBS5}.
\end{proof}
Note that this theorem fails whenever $M\in -2\mN$ and $m \neq 0$, because then the spaces $x^{2i} \cH_{k-2i}$ are no longer disjoint (we end up in the poles of the Gamma function appearing in lemma \ref{laplonpieces}). However, in the purely fermionic case we do have a Fischer decomposition, see section \ref{fermharmonics}.

Finally let us calculate the dimensions of the spaces of spherical harmonics in superspace. 

\begin{lemma}
One has that
\[
\dim \cH_k = \dim \cP_k - \dim \cP_{k-2}
\]
where
\[
\dim  \cP_k = \sum_{i=0}^{min(k,2n)} \binom{2n}{i} \binom{k-i+m-1}{m-1}
\]
and by definition $\dim  \cP_{-1} = \dim  \cP_{-2}=0$.
\end{lemma}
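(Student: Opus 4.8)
The plan is to split the claim into three independent pieces: the recursion $\dim \cH_k = \dim \cP_k - \dim \cP_{k-2}$, the closed formula for $\dim \cP_k$, and the boundary conventions. The recursion is the heart of the matter and follows directly from the Fischer decomposition (Theorem~\ref{fdecomp}). First I would observe that iterating that decomposition, $\cP_k = \cH_k \oplus x^2 \cP_{k-2}$ as vector spaces, because $\cP_{k-2} = \bigoplus_{i=0}^{\lfloor (k-2)/2 \rfloor} x^{2i} \cH_{k-2-2i}$ and multiplication by $x^2$ shifts the summation index, so $x^2 \cP_{k-2} = \bigoplus_{i=1}^{\lfloor k/2 \rfloor} x^{2i} \cH_{k-2i}$. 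The only subtlety is that one needs the map $R \mapsto x^2 R$ from $\cP_{k-2}$ to $\cP_k$ to be injective; but this is exactly guaranteed (for $M \not\in -2\mN$) by Lemma~\ref{laplonpieces} with $i=1$, since applying $\Delta$ a suitable number of times to $x^2 R$ recovers a nonzero multiple of $\Delta^{(\cdot)}R$ whenever $R \neq 0$. Hence $\dim \cP_k = \dim \cH_k + \dim \cP_{k-2}$, which rearranges to the stated recursion.

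Next I would establish the closed form for $\dim \cP_k$ by counting monomials directly, using the tensor decomposition $\cP = \mR[x_1,\dots,x_m] \otimes \Lambda_{2n}$ already recorded in section~2. A basis of $\cP_k$ consists of products $x^\alpha \cdot {x\grave{}}^\beta$ where $\alpha \in \mN^m$ is a multi-index, $\beta \in \{0,1\}^{2n}$ is a subset-type multi-index, and $|\alpha| + |\beta| = k$. Grouping by $i := |\beta|$, the number of admissible $\beta$ is $\binom{2n}{i}$, and for each such choice the number of monomials $x^\alpha$ of degree $k-i$ in $m$ commuting variables is the standard stars-and-bars count $\binom{(k-i)+m-1}{m-1}$. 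Summing over $i$ from $0$ to $\min(k,2n)$ (the upper limit because $\Lambda_{2n}$ has no elements of degree exceeding $2n$, and also $i \le k$ is forced) gives exactly the claimed formula. The conventions $\dim \cP_{-1} = \dim \cP_{-2} = 0$ are then just definitions ensuring the recursion makes sense for small $k$, and one checks the base cases $k=0,1$ by hand: $\cH_0 = \cP_0 = \mR$ and $\cH_1 = \cP_1$ since $\Delta$ lowers degree by two.

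The main obstacle, and really the only point requiring care, is the injectivity of multiplication by $x^2$ on homogeneous polynomials, equivalently the directness of the Fischer sum — this is precisely where the hypothesis $M \not\in -2\mN$ enters, and the excerpt's remark after Theorem~\ref{fdecomp} flags that the statement genuinely fails otherwise. Since Theorem~\ref{fdecomp} is already proved (reference \cite{DBS5}) and may be invoked, this obstacle is discharged for us; I would simply cite it. Everything else is elementary combinatorics. One might add a remark that the same dimension count in the purely fermionic case ($m=0$) degenerates to $\dim\cP_k = \binom{2n}{k}$, consistent with $\Lambda_{2n}$, though this is not needed for the statement.
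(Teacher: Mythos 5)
Your counting of $\dim\cP_k$ (grouping monomials $x^\alpha {x\grave{}}^{\beta}$ by the fermionic degree $i=|\beta|$) is correct, and so is the derivation of the recursion from the directness of $\cP_k=\cH_k\oplus x^2\cP_{k-2}$; since the paper itself only refers to \cite{DBS5} for this lemma, a self-contained argument is welcome. The genuine gap is that your route goes through Theorem \ref{fdecomp} and Lemma \ref{laplonpieces}, so it establishes the identity $\dim\cH_k=\dim\cP_k-\dim\cP_{k-2}$ only under the extra hypothesis $M\notin-2\mN$, which you yourself invoke for the injectivity of multiplication by $x^2$. The lemma, however, carries no such hypothesis (note that the paper attaches it explicitly to Lemma \ref{laplonpieces} and Theorem \ref{fdecomp}, but deliberately not here), and immediately after the lemma it is specialized to the purely fermionic case $m=0$, i.e.\ $M=-2n\in-2\mN$, to obtain $\dim\cH^f_k=\binom{2n}{k}-\binom{2n}{k-2}$ for $k\le n$. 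So the one family of cases your proof excludes contains precisely an application the paper makes of the lemma; your closing remark only re-derives $\dim\cP_k$ for $m=0$, not the recursion there.

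The repair does not need the Fischer decomposition at all. Since $\cH_k=\ker\left(\Delta\vert_{\cP_k}\right)$, the identity $\dim\cH_k=\dim\cP_k-\dim\cP_{k-2}$ is equivalent to surjectivity of $\Delta:\cP_k\rightarrow\cP_{k-2}$, and for $m\ge 1$ this holds for every value of $M$: split $\cP_l$ according to fermionic degree $j$ (the degree-$j$ part of $\Lambda_{2n}$ tensored with bosonic polynomials of degree $l-j$); $\Delta_b$ preserves $j$ while $\Delta_f$ lowers it by two, so writing $g=\sum_j g_j\in\cP_{k-2}$ and $f=\sum_j f_j$, one solves $\Delta_b f_j = g_j-\Delta_f f_{j+2}$ by downward induction on $j$, using the classical surjectivity of the bosonic Laplacian on polynomials in $m\ge1$ variables. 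For $m=0$ the recursion is only asserted (and only true) in the range $k\le n$ appearing in the paper, and there it follows from the fermionic Fischer decomposition (\ref{fermionicfischer}), where injectivity of multiplication by $\uxb^{2}$ on the relevant pieces replaces your use of Lemma \ref{laplonpieces}; indeed for $m=0$ and $k>n+1$ the right-hand side $\dim\cP_k-\dim\cP_{k-2}$ becomes negative, which shows some case distinction of this kind is unavoidable. With this surjectivity argument added (or with the scope of the statement explicitly restricted to $M\notin-2\mN$ plus the fermionic case handled via (\ref{fermionicfischer})), your proof is complete.
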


\begin{proof}
See \cite{DBS5}.
\end{proof}

In the purely bosonic and the purely fermionic case the previous lemma reduces to:
\[
\dim \cH_k^b = \binom{k+m-1}{m-1} - \binom{k-2+m-1}{m-1}
\]
and
\begin{eqnarray*}
\dim \cH_k^f &=& \binom{2n}{k} - \binom{2n}{k-2}, \quad k \leq n\\
\dim \cH_k^f &=& 0, \quad k > n.
\end{eqnarray*}

\begin{remark}
It is possible to refine the theory of spherical harmonics to spherical monogenics, i.e. null-solutions of the super Dirac operator, see \cite{DBS2}.
\end{remark}

\section{The group action}

In this section we show that the group $G = SO(m)\times Sp(2n)$ gives the appropriate action on $\cP$ for our purposes.

First of all, let us state the properties we want this group to exhibit:
\begin{itemize}
\item $G \cdot \cP_k \subseteq \cP_k$, i.e. the degree of homogeneity is preserved under $G$
\item $G \cdot x^2 = x^2$, i.e. the quadratic polynomial $x^2$ is invariant under $G$.
\end{itemize}

The first property restricts possible transformations to
\[
\begin{array}{lll}
y_i &=& \sum_{k=0}^{m} a_k^i x_k + \sum_{l=0}^{2n} b_l^i {x \grave{}}_l\\
\vspace{-2mm}\\
{y \grave{}}_j &=& \sum_{k=0}^{m} c_k^j x_k + \sum_{l=0}^{2n} d_l^j {x \grave{}}_l
\end{array}
\]
with $a_k^i,b_l^i,c_k^j,d_l^j \in \mR$. In matrix notation we have that
\[
y = S x = \left( \begin{array}{c|c} A&B\\ \hline \vspace{-3.5mm} \\C&D
\end{array}
 \right) x
\]
with $x =(x_1 \; \ldots \; x_m \;| \; {x \grave{}}_1 \; \ldots \; {x \grave{}}_{2n})^T$.

Similarly
\[
x^2 = x^T Q x = x^T \left( \begin{array}{c|c} -1&0\\ \hline \vspace{-3.5mm}  \\0&J
\end{array}
 \right)x
\]
with
\[
J = \left( \begin{array}{cccccc} 0&1/2&&&\\-1/2&0&&&\\&&\ddots&&\\&&&0&1/2\\&&&-1/2&0 
\end{array}
 \right)
\]

As we want that $y^2 = x^2$, this means that
\[
S^T Q S = Q.
\]

In terms of $A,B,C$ and $D$ this yields
\begin{eqnarray}
\label{eq1}
-A^T A + C^T J C &=& -1\\
\label{eq2}
-A^T B + C^T J D &=& 0\\
\label{eq3}
-B^T A + D^T J C &=& 0\\
\label{eq4}
-B^T B + D^T J D &=& J.
\end{eqnarray}

Adding to equation (\ref{eq1}) its transpose, taking into account that $J^T = -J$, leads to $A^T A =1$. Adding the transpose of (\ref{eq2}) to (\ref{eq3}) leads to $B^T A=0$ and thus to $B =0$ as $A$ is invertible. As $D^T J D = J$ we have that $D \in Sp(2n)$ and $D$ is invertible. Then (\ref{eq3}) becomes $D^T J C = 0$ and thus $C=0$ as also $J$ is invertible. We conclude that the group $G$ can be taken to be $SO(m) \times Sp(2n)$. The Lie algebra for $G$ is then given by the semi-simple algebra $\gso(m)\oplus\gsp_{\mC}(2n)$, the irreducible finite-dimensional representations of which are defined as tensors products $\mV_{\lambda} \otimes \mW_{\mu}$, where $\mV_{\lambda}$ denotes the irreducible $\gso(m)$-module with highest weight $\underline{\lambda}$ and $\mW_{\mu}$ the irreducible $\gsp_{\mC}(2n)$-module with highest weight $\underline{\mu}$.

It is also easily seen that $G \cdot \cH_k \subseteq \cH_k$ since the super Laplace operator is invariant under the action of $G$. However, the spaces $\cH_k$ are not irreducible under the action of $G$. As opposed to the purely bosonic and fermionic case, spaces of homogeneous (polynomial) solutions for the super Laplace operator do not lead to irreducible modules for the respective Lie algebra underlying the symmetry of the system. The complete decomposition will be presented in section \ref{decomp}.

\section{Representations of $SO(m)$ and $Sp(2n)$}

\subsection{Spherical harmonics in $\mR^m$}

The classical theory of spherical harmonics in $\mR^m$ is very well known (see e.g. \cite{MR0229863}). In this case the Fischer decomposition takes the following form:
\[
\cP_k = \bigoplus_{i=0}^{\left\lfloor \frac{k}{2} \right\rfloor} \ux^{2i} \cH_{k-2i}^b,
\]
where each space $\cH_k^b$ provides a model for the irreducible $\gso(m)$-module with highest weight vector $(k,0,\cdots,0)$.

Moreover, if $m > 2$ this is also the decomposition of the space of homogeneous polynomials of degree $k$ into irreducible pieces under the action of $SO(m)$. In the case where $m=2$, the spaces $\cH_{k-2i}^b$ all have dimension two and are also irreducible, when working over $\mR$ as is the case in this paper. When working over $\mC$, the spaces are still reducible because $SO(2)$ is abelian and all complex representations of abelian groups are one dimensional.

\subsection{Fermionic or symplectic harmonics}
\label{fermharmonics}

In the purely fermionic case there is also a Fischer decomposition (not included in theorem \ref{fdecomp}). It then takes the following form:
\begin{eqnarray}
\label{fermionicfischer}
\cP_k &=& \bigoplus_{i=0}^{\left\lfloor \frac{k}{2} \right\rfloor} \uxb^{2i} \cH_{k-2i}^f, \quad k\leq n\\
\label{fermfischer2}
\cP_{2n-k} &=& \bigoplus_{i=0}^{\left\lfloor \frac{k}{2} \right\rfloor} \uxb^{2n-2k+2i} \cH_{k-2i}^f, \quad k\leq n.
\end{eqnarray}
Each space $\cH_k^f$ provides a model for the fundamental representation for $\gsp_{\mC}(2n)$ with highest weight $(1,\cdots,1,0,\cdots,0)$, where the integer $1$ is to be repeated $k$ times ($k \leq n$) (see \cite{MR1153249}). This means that we have a decomposition of $\cP_k$ into irreducible pieces under the action of $Sp(2n)$.

Formula (\ref{fermionicfischer}) is proven in the same way as the general Fischer decomposition given in theorem \ref{fdecomp}. Formula (\ref{fermfischer2}) follows immediately by noticing that multiplication with $\uxb^{2n-2k}$ gives an isomorphism between $\cP_k$ and $\cP_{2n-k}$. 
From this fermionic Fischer decomposition we also obtain the following formulae:
\begin{eqnarray*}
\uxb^{2i} \cH_{k}^f &=& 0 \quad \mbox{for all $i > n-k$}\\
\uxb^{2i} \cH_{k}^f &\neq& 0 \quad \mbox{for all $i \leq n-k$}.
\end{eqnarray*}

\textbf{Construction of a basis:}

It is possible to construct a basis of $\cH_k^f$ by decomposing this space under the action of the subgroup $Sp(2) \times Sp(2n-2)$ of $Sp(2n)$. This leads to the following theorem.
\begin{theorem}
If $1<k\leq n$, then the space $\cH_k^f({x \grave{}}_1 , \ldots, {x \grave{}}_{2n})$ decomposes as
\begin{eqnarray*}
\cH_k^f({x \grave{}}_1 , \ldots, {x \grave{}}_{2n}) &=& \cH_k^f({x \grave{}}_3 , \ldots, {x \grave{}}_{2n}) \;  \oplus \; \cH_1^f({x \grave{}}_1 ,{x \grave{}}_2)\otimes \cH_{k-1}^f({x \grave{}}_3 , \ldots, {x \grave{}}_{2n})\\
&& \oplus \;  \left[{x \grave{}}_1 {x \grave{}}_2
 + \frac{1}{k-n-1} ({x \grave{}}_3 {x \grave{}}_4 + \ldots +{x \grave{}}_{2n-1} {x \grave{}}_{2n}) \right] \cH_{k-2}^f({x \grave{}}_3 , \ldots, {x \grave{}}_{2n}).
\end{eqnarray*}

If $k=1$, $\cH_1^f({x \grave{}}_1 , \ldots, {x \grave{}}_{2n})$ decomposes as
\[
\cH_1^f({x \grave{}}_1 , \ldots, {x \grave{}}_{2n}) = \cH_1^f({x \grave{}}_3 , \ldots, {x \grave{}}_{2n}) \oplus \cH_1^f({x \grave{}}_1 , {x \grave{}}_{2}).
\]
\end{theorem}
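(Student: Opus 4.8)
The plan is to separate the pair of variables $\{{x \grave{}}_1, {x \grave{}}_2\}$ from the remaining $2n-2$ variables ${x \grave{}}_3, \ldots, {x \grave{}}_{2n}$ and to rewrite the harmonicity condition in terms of this splitting. Concretely, I would write $\Delta_f = 4\partial_{{x \grave{}}_1}\partial_{{x \grave{}}_2} + \Delta_f'$ and $\uxb^2 = {x \grave{}}_1 {x \grave{}}_2 + \rho$, where $\Delta_f' = 4\sum_{j=2}^{n}\partial_{{x \grave{}}_{2j-1}}\partial_{{x \grave{}}_{2j}}$ and $\rho = \sum_{j=2}^{n}{x \grave{}}_{2j-1}{x \grave{}}_{2j} = {x \grave{}}_3{x \grave{}}_4 + \ldots + {x \grave{}}_{2n-1}{x \grave{}}_{2n}$ are exactly the fermionic Laplace operator and the fermionic ``square'' of the purely fermionic superspace in the variables ${x \grave{}}_3, \ldots, {x \grave{}}_{2n}$, whose super-dimension is $-(2n-2)$. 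Every $\omega \in \cP_k$ has a unique expansion $\omega = A + {x \grave{}}_1 B + {x \grave{}}_2 C + {x \grave{}}_1 {x \grave{}}_2 D$ with $A, B, C, D$ polynomials in ${x \grave{}}_3, \ldots, {x \grave{}}_{2n}$ of degrees $k$, $k-1$, $k-1$, $k-2$. The two structural facts I would use are that $\Delta_f'$, being an even operator that does not involve ${x \grave{}}_1, {x \grave{}}_2$, commutes with multiplication by ${x \grave{}}_1$ and by ${x \grave{}}_2$, and that $4\partial_{{x \grave{}}_1}\partial_{{x \grave{}}_2}$ annihilates $A$, ${x \grave{}}_1 B$ and ${x \grave{}}_2 C$ while sending ${x \grave{}}_1 {x \grave{}}_2 D$ to $-4D$ (the latter being nothing but $\Delta({x \grave{}}_1 {x \grave{}}_2) = 2M = -4$ for the pair ${x \grave{}}_1, {x \grave{}}_2$).

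A short computation then gives
\[
\Delta_f \omega = \bigl(\Delta_f' A - 4D\bigr) + {x \grave{}}_1\, \Delta_f' B + {x \grave{}}_2\, \Delta_f' C + {x \grave{}}_1 {x \grave{}}_2\, \Delta_f' D,
\]
so, by uniqueness of the $\{{x \grave{}}_1, {x \grave{}}_2\}$-expansion, $\omega \in \cH_k^f$ if and only if $B, C \in \cH_{k-1}^f({x \grave{}}_3, \ldots, {x \grave{}}_{2n})$, $D \in \cH_{k-2}^f({x \grave{}}_3, \ldots, {x \grave{}}_{2n})$ and $\Delta_f' A = 4D$. The one genuine computation I would carry out is the evaluation of $\Delta_f'(\rho D)$: applying Lemma \ref{iteratedlaplace} in the reduced fermionic superspace (super-dimension $-(2n-2)$, with $\Delta_f'$ and $\rho$ in the roles of $\Delta$ and $x^2$), with $t=1$ and the degree-$(k-2)$ harmonic $D$, yields $\Delta_f'(\rho D) = 4(k-n-1)D$. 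Since $1 < k \leq n$ forces $k-n-1 \leq -1 \neq 0$, the element $\frac{1}{k-n-1}\rho D$ solves $\Delta_f' A = 4D$, so the general solution is $A = A_0 + \frac{1}{k-n-1}\rho D$ with $A_0 \in \cH_k^f({x \grave{}}_3, \ldots, {x \grave{}}_{2n})$. Substituting and regrouping,
\[
\omega = A_0 + \bigl({x \grave{}}_1 B + {x \grave{}}_2 C\bigr) + \Bigl[{x \grave{}}_1 {x \grave{}}_2 + \tfrac{1}{k-n-1}\rho\Bigr] D,
\]
and since $\cH_1^f({x \grave{}}_1, {x \grave{}}_2) = \mR {x \grave{}}_1 \oplus \mR {x \grave{}}_2$ the middle term is a general element of $\cH_1^f({x \grave{}}_1, {x \grave{}}_2) \otimes \cH_{k-1}^f({x \grave{}}_3, \ldots, {x \grave{}}_{2n})$. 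This exhibits every $\omega \in \cH_k^f$ inside the asserted sum; the reverse inclusions are easy — immediate for the first two summands, and for the third it is again the identity $\Delta_f'(\rho D) = 4(k-n-1)D$ (so that $\Delta_f([{x \grave{}}_1{x \grave{}}_2 + \tfrac{1}{k-n-1}\rho]D) = -4D + 4D = 0$).

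To see the sum is direct I would once more use the uniqueness of the $\{{x \grave{}}_1, {x \grave{}}_2\}$-expansion: if a sum of elements of the three spaces vanishes, the ${x \grave{}}_1 {x \grave{}}_2$-component forces the harmonic $D$ in the third term to vanish (which also kills the $\rho D$ part), the component linear in ${x \grave{}}_1, {x \grave{}}_2$ forces the element of the second space to vanish, and the remaining ${x \grave{}}_1, {x \grave{}}_2$-free part forces the element of the first space to vanish. The case $k=1$ needs no argument: every homogeneous Grassmann polynomial of degree one is harmonic, so $\cH_1^f({x \grave{}}_1, \ldots, {x \grave{}}_{2n})$ is the span of all the ${x \grave{}}_i$, which splits visibly as $\cH_1^f({x \grave{}}_3, \ldots, {x \grave{}}_{2n}) \oplus \cH_1^f({x \grave{}}_1, {x \grave{}}_2)$.

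The only delicate point is the constant in $\Delta_f'(\rho D) = 4(k-n-1)D$: it depends on specializing Lemma \ref{iteratedlaplace} correctly (degree $k-2$, super-dimension $2-2n$) and on tracking signs in the Grassmann differentiations, in particular the $-4D$ coming from $4\partial_{{x \grave{}}_1}\partial_{{x \grave{}}_2}({x \grave{}}_1 {x \grave{}}_2 D)$. Everything else is bookkeeping with the ${x \grave{}}_1, {x \grave{}}_2$-bidegree; as an independent check, the statement can also be confirmed by a dimension count using $\dim \cH_k^f(2n) = \binom{2n}{k} - \binom{2n}{k-2}$ together with repeated application of Pascal's identity.
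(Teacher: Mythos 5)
Your proof is correct, but it follows a genuinely different route from the paper. The paper's argument is indirect: it checks (via formula (\ref{basicformula})) that each of the three summands lies in $\cH_k^f({x \grave{}}_1 , \ldots, {x \grave{}}_{2n})$, notes that they are mutually disjoint, and then concludes equality by comparing dimensions, using $\dim \cH_k^f = \binom{2n}{k}-\binom{2n}{k-2}$ and repeated applications of Pascal's rule. You instead argue constructively: expanding $\omega = A + {x \grave{}}_1 B + {x \grave{}}_2 C + {x \grave{}}_1 {x \grave{}}_2 D$ and splitting $\Delta_f = 4\partial_{{x \grave{}}_1}\partial_{{x \grave{}}_2} + \Delta_f'$, you reduce harmonicity to $\Delta_f' B = \Delta_f' C = \Delta_f' D = 0$ together with $\Delta_f' A = 4D$, and you solve the latter explicitly with the particular solution $\frac{1}{k-n-1}\rho D$, obtained from the same Lemma \ref{iteratedlaplace} applied in the reduced fermionic superspace of super-dimension $2-2n$ (the computation $\Delta_f'(\rho D)=4(k-n-1)D$ and the sign $4\partial_{{x \grave{}}_1}\partial_{{x \grave{}}_2}({x \grave{}}_1{x \grave{}}_2 D)=-4D$ are both right, and $1<k\leq n$ guarantees $k-n-1\neq 0$). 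What each approach buys: the paper's proof is shorter because the dimension formula is already available, but it leaves the origin of the coefficient $\frac{1}{k-n-1}$ implicit and only asserts disjointness; your proof explains where that coefficient comes from, establishes surjectivity without any counting, and gets directness for free from uniqueness of the $\{{x \grave{}}_1,{x \grave{}}_2\}$-expansion, at the cost of a slightly longer computation. Your closing dimension-count remark is exactly the paper's actual argument, so it is a sound cross-check rather than a redundancy.
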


\begin{proof}
One easily checks, using formula (\ref{basicformula}), that each summand in the right-hand side is contained in $\cH_k^f({x \grave{}}_1 , \ldots, {x \grave{}}_{2n})$. Moreover, all three summands are mutually disjoint. The proof is completed if we check that both sides have the same dimension (as vectorspaces). Indeed, the dimension of the right-hand side is:
\begin{eqnarray*}
\dim RH &=& \dim \cH_k^f({x \grave{}}_3 , \ldots, {x \grave{}}_{2n}) + \dim \cH_1^f({x \grave{}}_1 ,{x \grave{}}_2) \dim \cH_{k-1}^f({x \grave{}}_3 , \ldots, {x \grave{}}_{2n}) + \dim \cH_{k-2}^f({x \grave{}}_3 , \ldots, {x \grave{}}_{2n})\\
&=& \binom{2n-2}{k} - \binom{2n-2}{k-2} + 2  \left( \binom{2n-2}{k-1} -  \binom{2n-2}{k-3} \right)\\
&& +  \binom{2n-2}{k-2} -  \binom{2n-2}{k-4}\\
&=& \binom{2n}{k} - \binom{2n}{k-2}
\end{eqnarray*}
after several applications of Pascal's rule. This equals the dimension of the left-hand side. The proof of the second statement is trivial.
\end{proof}

This theorem can be used to construct in an iterative way bases for $\cH_k^f({x \grave{}}_1 , \ldots, {x \grave{}}_{2n})$, since $\cH_0^f \cong \cP_0$ and $\cH_1^f \cong \cP_1$.

\section{Projection operators}
\label{projoperators}

We can explicitly determine the Fischer decomposition (see theorem \ref{fdecomp}) of a given polynomial $R_k \in \cP_k$.
To that end we have to construct a set of operators $\mP_i^k, i=0, \ldots, \left\lfloor \frac{k}{2} \right\rfloor$, that show the following behaviour:
\[
\mP_i^k (x^{2j} \cH_{k-2j}) = \delta_{ij} \cH_{k-2j},
\]
when acting on $\cP_{k}$.

There are several ways to obtain these operators. 
One way is to exploit the $\mathfrak{sl}_{2}$ relations in a similar way as in e.g. the Dunkl case (see \cite{MR1827871,MR2207700}). Then, after lengthy and quite technical computations one obtains

\begin{theorem}
The operator $\mP_i^k$ defined by 
\begin{equation}
\mP_i^k = \sum_{l=0}^{\lfloor k/2 \rfloor -i} \frac{(-1)^l}{4^{l+i} l! i!} (k-2i+M/2-1) \frac{\Gamma(k-2i-l-1+M/2)}{\Gamma(k-i+M/2)} x^{ 2l}\Delta^{i+l}
\label{exprai}
\end{equation}
satisfies
\[
\mP_i^k (x^{2j} \cH_{k-2j}) = \delta_{ij} \cH_{k-2j}.
\]
\label{projectionsfischer}
\end{theorem}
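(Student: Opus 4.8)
The plan is to verify the claimed identity $\mP_i^k (x^{2j} \cH_{k-2j}) = \delta_{ij} \cH_{k-2j}$ by direct computation, using Lemma \ref{laplonpieces} to evaluate the action of each monomial term $x^{2l}\Delta^{i+l}$ appearing in the definition of $\mP_i^k$. The key observation is that Lemma \ref{laplonpieces} tells us exactly how $\Delta^{i+l}$ acts on $x^{2j}H_{k-2j}$: it produces $0$ when $i+l > j$ and otherwise equals $c_{i+l,j,k-2j}\, x^{2j-2i-2l}H_{k-2j}$ with an explicit Gamma-function constant. Multiplying back by $x^{2l}$ restores the power to $x^{2j-2i}$, so every surviving term in $\mP_i^k(x^{2j}H_{k-2j})$ is a scalar multiple of $x^{2j-2i}H_{k-2j}$ (which lies in $\cH_{k-2j}$ only when $j=i$, but as an element of $\cP$ it is always well-defined). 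Thus the whole proof reduces to showing two things: first, that the scalar vanishes whenever $j \neq i$, and second, that it equals $1$ (or rather that $\mP_i^k$ acts as the identity on the summand $x^{2i}\cH_{k-2i}$, i.e. sends $x^{2i}H_{k-2i}$ to $x^{2i}H_{k-2i}$) when $j=i$.

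First I would handle the case $j < i$: then $i+l \geq i > j$ for every $l \geq 0$, so every term dies by the second branch of Lemma \ref{laplonpieces}, and $\mP_i^k(x^{2j}H_{k-2j}) = 0$ trivially. Next, for $j \geq i$, I substitute $c_{i+l,j,k-2j}$ into \eqref{exprai} and collect the coefficient of $x^{2j-2i}H_{k-2j}$; after cancelling the common factors $4^{i+l}$ and simplifying the ratios of factorials and Gamma functions, the coefficient becomes a finite sum over $l$ of the form $\sum_l (-1)^l \binom{\,\cdot\,}{l}(\text{rational function of }k, M, i, j, l)$, i.e. essentially an alternating binomial sum. The goal is to show this sum equals $\delta_{ij}$. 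I expect this to telescope or to be evaluable by the standard identity $\sum_{l}(-1)^l\binom{N}{l} = 0$ for $N \geq 1$ (together with a Gamma-function rearrangement to isolate such a sum), leaving a single nonzero contribution exactly when the upper summation limit forces $N=0$, which happens precisely at $j=i$. One convenient route is to recognize the sum as a terminating hypergeometric series of type ${}_2F_1$ or ${}_3F_2$ evaluable by Chu–Vandermonde or the Pfaff–Saalschütz identity; the normalization constant $(k-2i+M/2-1)\Gamma(k-2i-l-1+M/2)/\Gamma(k-i+M/2)$ in \eqref{exprai} is manifestly designed to make this evaluation come out to $1$ in the diagonal case.

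The main obstacle will be the bookkeeping in this hypergeometric/binomial evaluation: keeping track of which Gamma-function arguments shift with $l$ versus which are constant, ensuring no spurious poles are hit (this is where the hypothesis $M \notin -2\mN$ from Lemma \ref{laplonpieces} is essential, guaranteeing all the $c_{\cdot,\cdot,\cdot}$ and the denominators are finite and nonzero), and correctly identifying the closed form of the alternating sum so that it collapses to $\delta_{ij}$. A secondary point worth stating carefully is that $\mP_i^k$ as written is a well-defined operator on all of $\cP_k$ and that, by the Fischer decomposition (Theorem \ref{fdecomp}), checking its action on each summand $x^{2j}\cH_{k-2j}$ suffices to pin it down; linearity then gives that $\mP_i^k$ is the projector onto the $i$-th Fischer component. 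Given that the paper already flags the derivation of \eqref{exprai} as arising from "lengthy and quite technical computations," I anticipate the author's proof is precisely this substitute-and-simplify verification, and I would organize it by first reducing to the scalar identity and then citing or proving the relevant terminating-series evaluation.
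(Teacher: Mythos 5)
Your plan is essentially correct and would go through, but note that it is \emph{not} the route the paper takes: the paper gives no verification of Theorem \ref{projectionsfischer} at all, saying only that the operators are obtained from the $\mathfrak{sl}_2$ relations ``after lengthy and quite technical computations'' in analogy with the Dunkl case, and offering as an alternative the operators $\widetilde{\mP}_i^k$ built from the eigenvalues of the Laplace--Beltrami operator (with $\widetilde{\mP}_i^k = x^{2i}\mP_i^k$ on $\cP_k$). Your direct substitute-and-simplify check via Lemma \ref{laplonpieces} is therefore a different, more self-contained argument once formula (\ref{exprai}) is given, and the ``hypergeometric'' step you anticipate is genuinely elementary: for $H_{k-2j}\in\cH_{k-2j}$ one gets $x^{2l}\Delta^{i+l}(x^{2j}H_{k-2j}) = 4^{i+l}\frac{j!}{(j-i-l)!}\frac{\Gamma(k-j+M/2)}{\Gamma(k-i-j-l+M/2)}\,x^{2j-2i}H_{k-2j}$ for $i+l\le j$ and $0$ otherwise (note $j-i\le \lfloor k/2\rfloor -i$, so the range of $l$ in (\ref{exprai}) suffices), whence the case $j<i$ is trivial, and for $j\ge i$, setting $s=j-i$ and $\alpha=k-2i-1+M/2$, the total coefficient of $x^{2j-2i}H_{k-2j}$ is a nonzero constant times $\sum_{l=0}^{s}\frac{(-1)^l}{l!(s-l)!}\frac{\Gamma(\alpha-l)}{\Gamma(\alpha-s-l+1)}$; the Gamma ratio is a polynomial in $l$ of degree $s-1$, so this alternating sum is (up to $s!$ and a sign) an $s$-th finite difference of a polynomial of degree $s-1$ and vanishes for $s\ge 1$, while for $s=0$ the prefactor $(k-2i+M/2-1)=\alpha$ cancels the surviving $\Gamma(\alpha)/\Gamma(\alpha+1)=1/\alpha$ and the coefficient is exactly $1$. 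One small correction to your wording: in the diagonal case $\mP_i^k$ sends $x^{2i}H_{k-2i}$ to $H_{k-2i}$, i.e.\ it strips off the factor $x^{2i}$ --- which is precisely what $\mP_i^k(x^{2j}\cH_{k-2j})=\delta_{ij}\cH_{k-2j}$ asserts; it is $\widetilde{\mP}_i^k$, not $\mP_i^k$, that acts as the identity on the summand $x^{2i}\cH_{k-2i}$, so your parenthetical ``sends $x^{2i}H_{k-2i}$ to $x^{2i}H_{k-2i}$'' should be discarded in favour of your first (correct) formulation that the image is the multiple $\delta_{ij}$ of $x^{2j-2i}H_{k-2j}$.
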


Note that this theorem is not valid if $M \in -2\mN$. In this case there is no Fischer decomposition and we end up in the poles of the Gamma function appearing in formula (\ref{exprai}).

Another way to obtain these projection operators is by using the Laplace-Beltrami operator in superspace. This operator is defined as
\[
\Delta_{LB} = x^2 \Delta - \mE(M-2 + \mE)
\]
and it is easy to check that it commutes with $x^{2}$. We then have
\begin{eqnarray*}
\Delta_{LB} x^{2j} \cH_k &=& x^{2j} \Delta_{LB} \cH_k\\
&=& -k(M-2+k) x^{2j} \cH_k,
\end{eqnarray*}
so each summand in the Fischer decomposition is an eigenspace of the Laplace-Beltrami operator. It is easy to see that  if $M > 0$, then for each $k \geq 0$ the eigenvalue $-k(M-2+k)$ is different. If $M < 0$ and $M$ is odd, then some values of $k$ give rise to the same eigenvalue. However, it is easily seen that this can only happen if they differ by an odd integer. In the purely fermionic case ($m=0$), then again $-k(M-2+k)$ is different for each $k$ because $0 \leq k \leq n$. This means that in the Fischer decomposition of $\cP_k$
\[
\cP_k = \bigoplus_{j=0}^{\left\lfloor \frac{k}{2} \right\rfloor} x^{2j} \cH_{k-2j}
\]
each summand has a different eigenvalue with respect to the Laplace-Beltrami operator. Hence the operator $\widetilde{\mP}_i^k$ defined by
\[
\widetilde{\mP}_i^k = \prod_{l=0, \;  l \neq i}^{\left\lfloor \frac{k}{2} \right\rfloor} \dfrac{\Delta_{LB} + (k-2l)(M-2+k-2l)}{2(i-l)(2k-2i-2l+M-2)}
\]
satisfies
\[
\widetilde{\mP}_i^k (x^{2j} \cH_{k-2j}) = \delta_{ij} x^{2j} \cH_{k-2j}.
\]
We also have that $\widetilde{\mP}_i^k = x^{2i} \mP^k_i$ when acting on $\cP_k$.

\section{Decomposition under the action of $SO(m) \times Sp(2n)$}
\label{decomp}

In this section the space $\cH_k$ will be decomposed into irreducible pieces under the action of the group $G=SO(m) \times Sp(2n)$. In view of the fact that irreducible representations for $G$, realized within the space $\cP_k$ of homogeneous polynomials in bosonic and fermionic variables, are tensor products of spaces of spherical and symplectic harmonics, it is natural to look for a subspace of $\cH_k$ inside the direct sum of subspaces of the form $\ux^{2i}\cH^b_p \otimes \uxb^{2j}\cH^f_q$. This is the subject of the following lemma.

\begin{lemma}
If $q < n$ and $k < n-q+1$, there exists a unique homogeneous polynomial $f_{k,p,q}=f_{k,p,q}(\ux^2,\uxb^2)$ of total degree $k$ such that $f_{k,p,q} \cH_p^b \otimes \cH_q^f \neq 0$ and
\[
\Delta (f_{k,p,q} \cH_p^b \otimes \cH_q^f) = 0,
\]
where the coefficient of $\ux^{2k}$ in $f_{k,p,q}$ is given by
\[
\frac{(n-q)!}{\Gamma(\frac{m}{2}+p+k)}.
\]
\label{polythm}
\end{lemma}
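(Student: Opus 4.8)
The idea is to look for $f_{k,p,q}$ as an explicit polynomial in the two commuting "radial" variables $\ux^2$ and $\uxb^2$, say
\[
f_{k,p,q} = \sum_{i+j=k} a_{ij} \, \ux^{2i} \uxb^{2j},
\]
and to impose the harmonicity condition $\Delta(f_{k,p,q}\, H_p^b \otimes H_q^f) = 0$ for arbitrary $H_p^b \in \cH_p^b$, $H_q^f \in \cH_q^f$. Since $\Delta = \Delta_b + \Delta_f$, with $\Delta_b$ acting only on the bosonic variables and $\Delta_f$ only on the fermionic ones, and since $\Delta_b H_p^b = 0$, $\Delta_f H_q^f = 0$, I would apply Lemma \ref{laplonpieces} (in its purely bosonic form) to compute $\Delta_b(\ux^{2i} H_p^b)$ and the analogous purely fermionic statement coming from the fermionic Fischer decomposition (\ref{fermionicfischer}) to compute $\Delta_f(\uxb^{2j} H_q^f)$. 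Concretely, $\Delta_b(\ux^{2i} H_p^b) = -4i(i-1+p+m/2)\,\ux^{2i-2}H_p^b$ and $\Delta_f(\uxb^{2j} H_q^f) = 4j(n-q-j+1)\,\uxb^{2j-2}H_q^f$ (the sign pattern follows from $\Delta(\ux^2) = -2m$, $\Delta(\uxb^2) = 2n$, and $M/2 = m/2 - n$). This collapses $\Delta(f_{k,p,q} H_p^b\otimes H_q^f) = 0$ into the requirement that, for each monomial $\ux^{2i}\uxb^{2j}$ with $i+j = k-1$, the coefficient
\[
-4(i+1)\Bigl(i+p+\tfrac{m}{2}\Bigr)a_{i+1,j} \;+\; 4(j+1)(n-q-j)\,a_{i,j+1} \;=\; 0
\]
vanishes. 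This is a first-order two-term recurrence relating $a_{i,j}$ along the anti-diagonal $i+j=k$.

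Next I would solve this recurrence. Reading it as a relation between consecutive $a_{i,j}$ (decreasing $i$, increasing $j$) gives a telescoping product: each $a_{k-r,r}$ is a fixed rational multiple of $a_{k,0}$, namely
\[
a_{k-r,r} = a_{k,0}\cdot \prod_{s=1}^{r} \frac{(k-s+p+m/2)}{(n-q-s+1)\cdot(\text{combinatorial factor})},
\]
which rearranges into a ratio of Gamma functions / Pochhammer symbols in $m/2+p$ and in $n-q$. The hypotheses $q < n$ and $k < n-q+1$ are exactly what is needed here: they guarantee that none of the denominators $n-q-s+1$ for $1 \le s \le r \le k$ vanish, so the recurrence can be solved uniquely, and also that $f_{k,p,q}\,\cH_p^b\otimes\cH_q^f \neq 0$ (in the fermionic factor, $\uxb^{2j}\cH_q^f \neq 0$ needs $j \le n-q$, and along the anti-diagonal $j \le k \le n-q$). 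The normalization is then fixed by declaring the coefficient of $\ux^{2k}$, i.e. $a_{k,0}$, to equal $(n-q)!/\Gamma(m/2+p+k)$; uniqueness is immediate because the recurrence determines all other $a_{ij}$ from $a_{k,0}$, and $\Delta_b$ not annihilating $\ux^{2i}\cH_p^b$ (valid since $M \notin -2\mN$ is implicit, or rather since $m > 0$ here, or can be checked directly) forces a nonzero solution to be nonzero in every anti-diagonal coefficient.

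The main obstacle I anticipate is purely bookkeeping: getting the constants in the two one-variable Laplacian formulas exactly right (signs, the shift by $m/2$ versus $M/2$, and the fact that $\Delta_f$ "raises" rather than "lowers" in a certain sense because $\uxb^2$ is nilpotent), and then recognizing the telescoped product as the specific Gamma-function expression claimed — in particular matching the stated leading coefficient $(n-q)!/\Gamma(m/2+p+k)$ and confirming it is consistent across the whole anti-diagonal. A secondary point requiring care is the existence/non-triviality claim: one must check that with this normalization $f_{k,p,q}$ is genuinely nonzero as a polynomial (it is, since $a_{k,0}\ne 0$) and that $f_{k,p,q}\,\cH_p^b\otimes\cH_q^f$ is nonzero as a subspace of $\cP$, which is where the degree bound $k \le n-q$ enters to keep the fermionic monomials $\uxb^{2j}$, $j\le k$, from vanishing in $\Lambda_{2n}$ against $\cH_q^f$.
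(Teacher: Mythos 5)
Your proposal is correct and follows essentially the same route as the paper: expand $f_{k,p,q}$ in monomials $\ux^{2i}\uxb^{2j}$ with $i+j=k$, apply $\Delta=\Delta_b+\Delta_f$ via Lemma \ref{iteratedlaplace} to get a two-term recurrence along the anti-diagonal, solve it as a telescoping Gamma-quotient, and use $k\leq n-q$ both to avoid vanishing denominators and to keep $\uxb^{2j}\cH_q^f\neq 0$ (the paper merely does $p=q=0$ first and then substitutes $m\to m+2p$, $n\to n-q$). Note only that in the paper's conventions $\Delta_b(\ux^{2i}H_p^b)=+4i(i-1+p+\tfrac{m}{2})\ux^{2i-2}H_p^b$ and $\Delta_f(\uxb^{2j}H_q^f)=-4j(n-q-j+1)\uxb^{2j-2}H_q^f$ (indeed $\Delta_b(\ux^2)=2m$, $\Delta_f(\uxb^2)=-4n$), so both of your signs are flipped; since the flip is simultaneous it cancels in the homogeneous recurrence and your final formula and normalization are unaffected.
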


\begin{remark}
The restriction $q < n$ is necessary because in case $q = n$, all integer powers $\uxb^{2j}$ will act trivially on the space $\cH^f_n$, for $j > 0$. In the same vein, for $q < n$ there can only be a non-trivial action of $\uxb^{2j}$ on the space $\cH^f_q$ as long as $j \leq k$ with $k + q \leq n$. This explains the restricion $k < n-q+1$. 
\end{remark}

\begin{proof}
We first treat the case $p=q=0$. So we look for an $f_{k,0,0}(\ux^2,\uxb^2)$ of the following form:
\[
f_{k,0,0} = \sum_{i=0}^k a_i \ux^{2k-2i} \uxb^{2i}. 
\]

We now demand that 
\[
\Delta( f_{k,0,0} \cH_0^b \otimes \cH_0^f) = \Delta( f_{k,0,0})= 0.
\]
As $\Delta = \Delta_b + \Delta_f$ we find, using lemma \ref{iteratedlaplace}, that
\begin{eqnarray*}
\Delta( f_{k,0,0}) &=& \sum_{i=0}^{k-1} a_i (2k-2i)(m+2k-2i-2) \ux^{2k-2i-2} \uxb^{2i} + \sum_{i=1}^k a_i 2i (2i-2-2n) \ux^{2k-2i} \uxb^{2i-2}\\
&=&\sum_{i=0}^{k-1} \left[ a_i (2k-2i)(m+2k-2i-2) + a_{i+1} (2i+2) (2i-2n) \right]   \ux^{2k-2i-2} \uxb^{2i}.
\end{eqnarray*}

Hence we obtain the following recursion relation for the $a_i$:
\[
a_{i+1} = \frac{(2k-2i)(m+2k-2i-2)}{(2n-2i)(2i+2)} a_i
\]
which leads to the following explicit formula
\[
a_i = \frac{\Gamma(\frac{m}{2} + k)}{n!} \binom{k}{i} \frac{(n-i)!}{\Gamma(\frac{m}{2} + k-i)} a_0.
\]

If we now put $a_0 =  \frac{n!}{\Gamma(\frac{m}{2} + k)}$ we finally obtain
\begin{equation}
f_{k,0,0} =  \sum_{i=0}^k \binom{k}{i} \frac{(n-i)!}{\Gamma(\frac{m}{2} + k-i)} \ux^{2k-2i} \uxb^{2i}. 
\label{specpoly}
\end{equation}

The general case is now easily obtained. Indeed, the polynomial $f_{k,p,q}$ satisfying
\[
\Delta( f_{k,p,q} \cH_p^b \otimes \cH_q^f) = 0
\]
is found by the following substitutions in formula (\ref{specpoly}): $m \rightarrow m+2p$, $n \rightarrow n-q$ (see lemma \ref{iteratedlaplace}). This yields:
\[
f_{k,p,q} = \sum_{i=0}^k \binom{k}{i} \frac{(n-q-i)!}{\Gamma(\frac{m}{2} + p+ k-i)} \ux^{2k-2i} \uxb^{2i}.
\]

This final formula again explains the restrictions put on $k$ and $q$.
\end{proof}

We list some special cases, viz. the polynomials $f_{i,0,0}$ for $i = 1,2,3$:
\begin{eqnarray*}
f_{1,0,0} &=& \frac{n!}{\Gamma(\frac{m}{2}+1)} \left(\ux^2 + \frac{m}{2n} \uxb^2 \right)\\
f_{2,0,0} &=&\frac{n!}{\Gamma(\frac{m}{2}+2)} \left(\ux^4 + \frac{m+2}{n} \ux^2\uxb^2+ \frac{m (m+2)}{4n(n-1)}\uxb^4\right)\\
f_{3,0,0} &=&\frac{n!}{\Gamma(\frac{m}{2}+3)} \left(\ux^6 + \frac{3(m+4)}{2n} \ux^4\uxb^2+ \frac{ 3(m+2)(m+4)}{4n(n-1)}\ux^2\uxb^4 + \frac{m(m+2)(m+4)}{8n(n-1)(n-2)} \uxb^6 \right).
\end{eqnarray*}

Using some elementary identities for special functions, the polynomials obtained in lemma \ref{polythm} can be rewritten in terms of well-known orthogonal polynomials. One finds that the polynomial $f_{k,p,q}$ can be represented as
\[
f_{k,p,q} = -\pi\:\frac{(n-q)!}{\left(p+k+\frac{m}{2}\right)!}\:\frac{k!}{(q-n)_k}\:\ux^{2k}P^{(q-k-1,1-k-p-q-\frac{m}{2})}_k\left(1+2\frac{\uxb^2}{\ux^2}\right),
\]
with $P^{\alpha,\beta}_n(t)$ the Jacobi polynomial.

We now check that the following dimension formula holds:
\begin{lemma}
One has
\[
\dim \cH_{k} = \sum_{i=0}^{\min(n,k)} \dim{\cH^b_{k-i}} \dim{\cH^f_{i}} + \sum_{j=0}^{\min(n, k-1)-1} \sum_{l=1}^{\min(n-j,\lfloor \frac{k-j}{2} \rfloor)} \dim{\cH^b_{k-2l-j}} \dim{\cH^f_{j}}.
\]
\label{dimcheck}
\end{lemma}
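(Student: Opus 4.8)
The plan is to compute $\dim\cH_k$ starting from the identity $\dim\cH_k=\dim\cP_k-\dim\cP_{k-2}$ proved above, rewrite both terms through the bosonic and fermionic Fischer decompositions as a single ``diamond-shaped'' double sum, and then recognise the right-hand side of the lemma as a rearrangement of that sum organised by fermionic degree.

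First I would unfold $\dim\cP_a$. Since $\cP=\mR[x_1,\ldots,x_m]\otimes\Lambda_{2n}$ we have $\dim\cP_a=\sum_{b=0}^{a}\dim\cP^b_{a-b}\,\dim\cP^f_b$. For the fermionic factor, the two Fischer decompositions (\ref{fermionicfischer})--(\ref{fermfischer2}), together with the non-vanishing statement $\uxb^{2s}\cH^f_q\neq0\iff s\leq n-q$ recorded just above, show that $\Lambda_{2n}$ is, in each degree separately, the direct sum of the spaces $\uxb^{2s}\cH^f_q$ with $(q,s)$ ranging over the ``diamond'' $\{0\leq q\leq n,\ 0\leq s\leq n-q\}$, the summand $\uxb^{2s}\cH^f_q$ sitting in degree $q+2s$. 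Hence $\dim\cP^f_b=\sum_{q+2s=b}\dim\cH^f_q$ over that index set, so
\[
\dim\cP_a=\sum_{q=0}^{n}\ \sum_{s=0}^{n-q}\dim\cH^f_q\,\dim\cP^b_{a-q-2s}.
\]
Subtracting the same identity for $a=k-2$ and using $\dim\cP^b_j-\dim\cP^b_{j-2}=\dim\cH^b_j$ for every $j$ (with $\dim\cP^b_i=\dim\cH^b_i=0$ for $i<0$; no truncation subtlety arises on the bosonic side, where harmonics of distinct degrees are never ``repeated''), I would arrive at
\[
\dim\cH_k=\sum_{q=0}^{n}\dim\cH^f_q\sum_{s=0}^{n-q}\dim\cH^b_{k-q-2s}.
\]

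Next I would match this with the right-hand side of the lemma by grouping its two sums according to fermionic degree. Fix $j$ with $0\leq j\leq n$. The first sum contributes the single term $\dim\cH^b_{k-j}$, present precisely when $j\leq k$ (given $j\leq n$); the second sum contributes $\sum_{l=1}^{\min(n-j,\lfloor(k-j)/2\rfloor)}\dim\cH^b_{k-2l-j}$, present precisely when $j\leq n-1$ and $j\leq k-2$. It therefore suffices to check, for each $j$, that the total of these $\dim\cH^b_{k-j-2s}$-terms equals $\sum_{s=0}^{n-j}\dim\cH^b_{k-j-2s}$. This is a short case analysis whose only inputs are $\dim\cH^b_i=0$ for $i<0$ and the bosonic Fischer identity $\sum_{s=0}^{\lfloor t/2\rfloor}\dim\cH^b_{t-2s}=\dim\cP^b_t$: when $n-j\geq\lfloor(k-j)/2\rfloor$ both sides equal $\dim\cP^b_{k-j}$ (the surplus terms on the right have negative degree), when $n-j<\lfloor(k-j)/2\rfloor$ both sides are the Fischer sum truncated at $s=n-j$, and the boundary values $j\in\{k-1,k\}$ and $j=n$ are checked directly against the various minima, in each case the right-hand side collapsing to one surviving term. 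Summing over $j$ yields the lemma.

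The only genuine difficulty is bookkeeping. On the fermionic side one has to see that the index ranges in (\ref{fermionicfischer}) and (\ref{fermfischer2}) dovetail exactly into the single diamond $\{0\leq q\leq n,\ 0\leq s\leq n-q\}$ with neither gaps nor overlaps --- this is where $\uxb^{2s}\cH^f_q\neq0\iff s\leq n-q$ is used, and $\dim\Lambda_{2n}=2^{2n}$ serves as a convenient consistency check --- and on the bosonic side one must keep careful track of the floors and minima in the last identity. A more pedestrian alternative would be to substitute the closed binomial formulas for $\dim\cP_k$, $\dim\cH^b_k$ and $\dim\cH^f_k$ and grind out the identity with repeated use of Pascal's rule; the Fischer-decomposition argument above is shorter and makes transparent why the answer has exactly the shape of the (forthcoming) decomposition of $\cH_k$ into the $G$-submodules produced by lemma \ref{polythm}.
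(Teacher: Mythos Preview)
Your argument is correct and takes a genuinely different route from the paper. The paper proceeds by induction on $k$: it verifies the cases $k=1,2$ by hand, assumes the formula for $\cH_{k-2}$, reduces the claim to the simpler recursion $\dim\cH_k = \dim\cH_{k-2} + \sum_{i=0}^k\dim\cH^b_{k-i}\dim\cH^f_i$, and then checks this recursion by plugging in the closed binomial expressions for all dimensions involved and simplifying via Pascal's rule (exactly the ``more pedestrian alternative'' you mention at the end). The case $k>n$ is deferred to an analogous computation. Your approach, by contrast, is direct and structural: you expand $\dim\cP_a$ through the tensor grading and the fermionic Fischer ``diamond'', subtract to turn each bosonic factor $\dim\cP^b_\bullet - \dim\cP^b_{\bullet-2}$ into $\dim\cH^b_\bullet$, obtain the closed form $\dim\cH_k=\sum_{q=0}^n\dim\cH^f_q\sum_{s=0}^{n-q}\dim\cH^b_{k-q-2s}$, and then recognise the right-hand side of the lemma as a regrouping of this sum by a short case analysis (which I checked goes through for all the boundary values $j=n$, $j\in\{k-1,k\}$, and both orderings of $n-j$ against $\lfloor(k-j)/2\rfloor$). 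Your method avoids induction, treats all $k$ uniformly, never touches a binomial coefficient, and --- as you point out --- makes transparent the link with the forthcoming decomposition of $\cH_k$ into $G$-irreducibles; the paper's proof is more elementary in that it only manipulates explicit numbers, at the cost of being longer and having to split into $k\leq n$ versus $k>n$.
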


\begin{proof}
The cases $k=1$ and $k=2$ are easily verified by explicitly writing the decomposition and plugging in the binomial factors. We then proceed by induction. We restrict ourselves to the case $ k \leq n$. We then need to prove that
\begin{equation}
\dim \cH_{k} = \sum_{i=0}^{k} \dim{\cH^b_{k-i}} \dim{\cH^f_{i}} + \sum_{j=0}^{k-2} \sum_{l=1}^{\lfloor \frac{k-j}{2} \rfloor} \dim{\cH^b_{k-2l-j}} \dim{\cH^f_{j}}.
\label{proofdimform}
\end{equation}

Now suppose the lemma holds for $\cH_{k-2}$, i.e.
\[
\dim \cH_{k-2} = \sum_{i=0}^{k-2} \dim{\cH^b_{k-i-2}} \dim{\cH^f_{i}} + \sum_{j=0}^{k-4} \sum_{l=1}^{\lfloor \frac{k-j}{2} \rfloor-1} \dim{\cH^b_{k-2l-j-2}} \dim{\cH^f_{j}}.
\]
Using this we can rewrite (\ref{proofdimform}) as
\[
\dim \cH_k = \dim \cH_{k-2} + \sum_{i=0}^{k} \dim{\cH^b_{k-i}} \dim{\cH^f_{i}}.
\]

We prove that this formula holds. The left-hand side equals
\[
LH =  \sum_{i=0}^k \binom{2n}{i} \binom{k-i+m-1}{m-1} -  \sum_{i=0}^{k-2} \binom{2n}{i} \binom{k-i-2+m-1}{m-1}.
\]

The right-hand side is calculated as 
\begin{eqnarray*}
RH &=& \sum_{i=0}^{k-2} \binom{2n}{i} \binom{k-i-2+m-1}{m-1} -  \sum_{i=0}^{k-4} \binom{2n}{i} \binom{k-i-4+m-1}{m-1}\\
&& + \sum_{i=0}^k \left(\binom{2n}{i} - \binom{2n}{i-2} \right) \left( \binom{k-i+m-1}{m-1} -\binom{k-i-2+m-1}{m-1} \right).
\end{eqnarray*}

The second line in this equation is expanded as
\begin{eqnarray*}
\sum_{i=0}^k \binom{2n}{i} \binom{k-i+m-1}{m-1} - \sum_{i=0}^{k-2} \binom{2n}{i} \binom{k-i-2+m-1}{m-1}\\
-\sum_{i=0}^{k-2} \binom{2n}{i} \binom{k-i-2+m-1}{m-1}+\sum_{i=0}^{k-4} \binom{2n}{i} \binom{k-i-4+m-1}{m-1}.
\end{eqnarray*}

So by adding all terms we see that the right-hand side equals the left-hand side, thus completing the proof. The case where $k >n$ is treated in a similar fashion.
\end{proof}

Now we are able to obtain the main decomposition of this section. 
First we introduce the operators
\begin{eqnarray*}
\mQ_{r,s}^k &=& \prod_{i=0, \;  i \neq k-2r-s}^{k} \dfrac{\Delta_{LB,b} + i(m-2+i)}{(i-k+2r+s)(k+i-2r-s+m-2)}\\
&&\times  \prod_{j=0, \;  j \neq s}^{\min{(n,k)}} \dfrac{\Delta_{LB,f} + j(-2n-2+j)}{(j-s)(j+s-2n-2)},
\end{eqnarray*}
with
\begin{eqnarray*}
\Delta_{LB,b} &=& \ux^2 \Delta_b - \mE_b(m-2 + \mE_b)\\
\Delta_{LB,f} &=& \uxb^2 \Delta_f - \mE_f(-2n-2 + \mE_f)
\end{eqnarray*}
the bosonic resp. fermionic Laplace-Beltrami operator.
The decomposition is then given in the following theorem.

\begin{theorem}[Decomposition of $\cH_k$]
Under the action of $SO(m) \times Sp(2n)$ the space $\cH_k$ decomposes as
\begin{equation}
\cH_{k} = \bigoplus_{i=0}^{\min(n,k)} \cH^b_{k-i} \otimes \cH^f_{i} \;\; \oplus \;\; \bigoplus_{j=0}^{\min(n, k-1)-1} \bigoplus_{l=1}^{\min(n-j,\lfloor \frac{k-j}{2} \rfloor)} f_{l,k-2l-j,j} \cH^b_{k-2l-j} \otimes \cH^f_{j},
\label{decompintoirreps}
\end{equation}
with $f_{l,k-2l-j,j}$ the polynomials determined in lemma \ref{polythm}.

Moreover, all direct summands in this decomposition are irreducible under the action of $SO(m) \times Sp(2n)$ and one has
\[
\mQ_{r,s}^k \left( f_{l,k-2l-j,j} \cH^b_{k-2l-j} \otimes \cH^f_{j} \right)= \delta_{rl} \delta_{sj} f_{l,k-2l-j,j} \cH^b_{k-2l-j} \otimes \cH^f_{j}.
\]
\label{completedecomp}
\end{theorem}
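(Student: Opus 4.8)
The plan is to establish the decomposition \eqref{decompintoirreps} in three stages: first produce the direct-sum containment, then upgrade it to an equality by dimension count, and finally verify irreducibility and the projector identity. For the containment, I would observe that each space $\cH^b_{k-i} \otimes \cH^f_i$ for $0 \le i \le \min(n,k)$ consists of products of bosonic harmonics and fermionic harmonics, hence is annihilated by $\Delta_b$ and by $\Delta_f$ separately, so it lies in $\cH_k$. For the second family, Lemma~\ref{polythm} was engineered precisely so that $\Delta(f_{l,k-2l-j,j}\,\cH^b_{k-2l-j}\otimes\cH^f_j)=0$; one only needs to check that the parameter constraints $q<n$ and $k<n-q+1$ in that lemma are met here, which translate exactly into the range $0 \le j \le \min(n,k-1)-1$ and $1 \le l \le \min(n-j,\lfloor\frac{k-j}{2}\rfloor)$ appearing in the sum. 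Thus every summand on the right lies in $\cH_k$. To see that the sum is direct, I would use the $SO(m)\times Sp(2n)$-module structure: the bosonic factor pins down the $SO(m)$-highest weight $(k-2l-j,0,\dots,0)$ and the fermionic factor the $Sp(2n)$-highest weight $(1,\dots,1,0,\dots,0)$ ($j$ ones), so distinct pairs $(p,j)$ give non-isomorphic $G$-modules; within a fixed pair $(p,j)$, the summands occur at distinct total degrees in the bosonic/fermionic radial variables $\ux^2,\uxb^2$ (one as $\cH^b_p\otimes\cH^f_j$ with $p=k-j$, the others as $f_{l,p,j}\cH^b_p\otimes\cH^f_j$ with $p=k-2l-j$), and the operators $\mQ^k_{r,s}$ — built from the commuting Laplace--Beltrami operators $\Delta_{LB,b}$ and $\Delta_{LB,f}$, whose eigenvalues $i(m-2+i)$ and $j(-2n-2+j)$ separate the relevant indices — separate them. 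This simultaneously gives directness and the projector formula $\mQ^k_{r,s}=\delta_{rl}\delta_{sj}$ on the respective summand, since each $\mQ^k_{r,s}$ is the standard Lagrange-interpolation idempotent for the pair of semisimple operators.

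Once we have $\bigoplus(\text{summands})\subseteq\cH_k$, equality follows from Lemma~\ref{dimcheck}: the dimension of the right-hand side of \eqref{decompintoirreps} is exactly $\sum_{i}\dim\cH^b_{k-i}\dim\cH^f_i + \sum_{j}\sum_{l}\dim\cH^b_{k-2l-j}\dim\cH^f_j$ (using that $f_{l,p,j}\,\cH^b_p\otimes\cH^f_j$ is a nonzero scalar-polynomial multiple of $\cH^b_p\otimes\cH^f_j$, hence of the same dimension — the nonvanishing being guaranteed by the parameter ranges and by the explicit nonzero leading coefficient recorded in Lemma~\ref{polythm}), and this is precisely the formula proved in Lemma~\ref{dimcheck}. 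Irreducibility of each summand under $G$ is then immediate from the representation theory of $SO(m)\times Sp(2n)$: $\cH^b_p$ is the irreducible $\gso(m)$-module $\mV_{(p,0,\dots,0)}$ (Section~4.1) and $\cH^f_j$ is the irreducible $\gsp_{\mC}(2n)$-module $\mW_{(1,\dots,1,0,\dots,0)}$ (Section~4.2), so their tensor product is an irreducible $G$-module, and multiplication by the $G$-invariant polynomial $f_{l,p,j}=f_{l,p,j}(\ux^2,\uxb^2)$ is a $G$-module isomorphism onto its image.

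The step I expect to be the main obstacle is the directness/separation argument in the exceptional small cases — in particular when $m=2$ (where $\cH^b_p$ is reducible over $\mC$ but still irreducible over $\mR$, as noted in Section~4.1, so the claim must be read over $\mR$), when $m$ is a small even integer making some $\Gamma(\frac m2+p+k-i)$ land near a pole, or when two distinct indices $i$ happen to satisfy $i(m-2+i)=i'(m-2+i')$ for $M<0$. In those cases one must check that $\Delta_{LB,b}$ alone no longer separates all indices but that the \emph{pair} $(\Delta_{LB,b},\Delta_{LB,f})$ does — i.e. that no two distinct summands in \eqref{decompintoirreps} share both a bosonic eigenvalue and a fermionic eigenvalue — and hence that the product operators $\mQ^k_{r,s}$ are still well-defined (no zero denominators) and still implement the claimed projections. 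I would handle this by noting that two summands with the same fermionic label $j$ necessarily have bosonic degrees of the same parity (both $\equiv k-j \bmod 2$), so a coincidence $i(m-2+i)=i'(m-2+i')$ with $i\ne i'$ would force $i+i'=2-m\le 0$, impossible for nonnegative $i,i'$ not both zero; this is exactly the parity remark already made for $\widetilde{\mP}^k_i$ in Section~\ref{projoperators}, and it closes the argument.
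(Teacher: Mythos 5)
Your proposal follows essentially the same route as the paper: containment of each summand in $\cH_k$ via Lemma \ref{polythm}, disjointness through the simultaneous eigenvalues of the commuting operators $\Delta_{LB,b}$ and $\Delta_{LB,f}$ realized by the projectors $\mQ^k_{r,s}$, equality of dimensions via Lemma \ref{dimcheck}, and irreducibility of the summands as tensor products of irreducible $SO(m)$- and $Sp(2n)$-modules (with multiplication by the invariant polynomial $f_{l,p,j}$ an isomorphism onto its image). Your extra discussion of possible eigenvalue coincidences for small $m$ and the parity argument resolving them is a sound refinement of a point the paper's proof passes over silently, but it does not change the method.
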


\begin{proof}
Using lemma \ref{polythm} we see that the right-hand side is contained in the left-hand side.  
Moreover we have that all summands are mutually disjoint. Indeed, as the bosonic and fermionic Laplace-Beltrami operators $\Delta_{LB,b}$ and $\Delta_{LB,f}$ both commute with $\ux^2$ and $\uxb^2$ we have that
\begin{eqnarray*}
&&\Delta_{LB,b}\Delta_{LB,f} f_{l,k-2l-j,j} \cH^b_{k-2l-j} \otimes \cH^f_{j}\\&=& f_{l,k-2l-j,j} \Delta_{LB,b}\cH^b_{k-2l-j} \otimes \Delta_{LB,f} \cH^f_{j}\\
&=& (k-2l-j)(m-2+k-2l-j)(j)(-2n-2+j)  f_{l,k-2l-j,j} \cH^b_{k-2l-j} \otimes \cH^f_{j}
\end{eqnarray*}
and hence that
\[
\mQ_{r,s}^k \left( f_{l,k-2l-j,j} \cH^b_{k-2l-j} \otimes \cH^f_{j} \right)= \delta_{rl} \delta_{sj} f_{l,k-2l-j,j} \cH^b_{k-2l-j} \otimes \cH^f_{j},
\]
proving that all summands are disjoint.

Lemma \ref{dimcheck} then shows that the left-hand side and the right-hand side of formula (\ref{decompintoirreps}) have the same dimension, so the decomposition holds.
As to the irreducibility, the pieces  $f_{l,k-2l-j,j} \cH^b_{k-2l-j} \otimes \cH^f_{j}$ clearly transform into themselves under the action of $SO(m) \times Sp(2n)$ and they are irreducible as tensor products of irreducible representations of $SO(m)$ and $Sp(2n)$.
\end{proof}

\section{The problem of integration over the supersphere}

In this and the following sections we restrict ourselves to the case $M = m-2n \not \in -2\mN$. This assumption allows us to use the Fischer decomposition (theorem \ref{fdecomp}) and the corresponding projections (theorem \ref{projectionsfischer}).

\vspace{2mm}
We start by giving a set of properties we want an integral over the supersphere to show. The supersphere is the formal object defined by the algebraic equation $x^2 = -1$ (the bosonic version of this equation is exactly the equation of the unit-sphere in $\mR^m$).

\begin{definition}
A linear functional $T: \cP \rightarrow \mR$ is called an integration over the supersphere if it satisfies the following properties for all $f(x) \in \cP$:
\begin{enumerate}
\item $T(x^2 f(x)) = - T(f(x))$
\item $T(f(g \cdot x)) = T(f(x))$, \quad $\forall g \in SO(m)\times Sp(2n)$.
\end{enumerate}
\label{defintegral}
\end{definition}

These two properties are of course very natural. The first one says that we can work modulo $x^2+1$ (the equation of the supersphere). The second property is just the invariance of integrals under the action of $SO(m) \times Sp(2n)$ (this is the generalization of rotational invariance when integrating over the classical sphere).

We will now determine the set of all functionals satisfying these two properties. More precisely, we will obtain the following theorem. 

\begin{theorem}
The space of all linear functionals $T: \cP \rightarrow \mR$ satisfying the properties 1 and 2 of definition \ref{defintegral} is a finite-dimensional vectorspace of dimension $n+1$.
\label{intdim}
\end{theorem}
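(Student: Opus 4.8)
The plan is to reduce the problem to linear algebra on the Fischer decomposition via Schur's Lemma. First I would observe that property~1 lets us work modulo the ideal generated by $x^2+1$; equivalently, using the Fischer decomposition of theorem~\ref{fdecomp}, any $f \in \cP_k$ can be written uniquely as $\sum_{i} x^{2i} H_{k-2i}$ with $H_{k-2i} \in \cH_{k-2i}$, and property~1 forces $T(x^{2i}H_{k-2i}) = (-1)^i T(H_{k-2i})$. Hence $T$ is completely determined by its restriction to $\bigoplus_k \cH_k$, and in fact by its restriction to each $\cH_k$ separately (since $T$ is a fixed linear functional and the grading is preserved). So the task becomes: determine the space of $G$-invariant linear functionals on each $\cH_k$, and then check that the resulting data for different $k$ are linked by property~1 in a way that leaves exactly $n+1$ free parameters.

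Next I would invoke theorem~\ref{completedecomp}: each $\cH_k$ decomposes into irreducible $G$-summands of the form $f_{l,k-2l-j,j}\,\cH^b_{k-2l-j}\otimes\cH^f_j$ (including the $l=0$ pieces $\cH^b_{k-i}\otimes\cH^f_i$). A $G$-invariant functional on $\cH_k$ must annihilate every summand that is not the trivial representation of $G$, by Schur's Lemma. The trivial representation of $SO(m)$ inside $\cH^b_p$ occurs only for $p=0$ (where $\cH^b_0 \cong \mR$), and the trivial representation of $Sp(2n)$ inside $\cH^f_q$ occurs only for $q=0$. Therefore the only summands of $\cH_k$ supporting a nonzero invariant functional are those with $k-2l-j = 0$ and $j=0$, i.e. $k$ even, $l = k/2$, $j=0$: the single line $\mR\, f_{k/2,0,0}$. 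Consequently $\cH_k$ carries a one-dimensional space of $G$-invariant functionals when $k$ is even and none when $k$ is odd, and a $G$-invariant functional on all of $\cP$ is given by choosing one scalar $c_{k}$ for each even $k \ge 0$ (the value $T(f_{k/2,0,0})$, or equivalently $T$ of the corresponding ``harmonic'' generator $\mI_k := $ the degree-$k$ part of $f_{k/2,0,0}$), with all odd-degree harmonics sent to $0$.

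Then property~1 must be re-imposed to count the genuinely free parameters. A priori we have one scalar per even degree, so infinitely many; property~1 will cut this down. The mechanism: $x^2 \cdot f_{k/2,0,0}$, expanded in the Fischer decomposition of $\cP_{k+2}$, has a component in $\cH^b_0\otimes\cH^f_0$-direction (the piece $f_{(k+2)/2,0,0}$) but generically also components along $\ux^{2}\cdot(\text{lower harmonic of the same type})$, and property~1 together with the invariance constraints forces a linear recursion among the $c_k$. I expect the outcome is that $c_0, c_2, \dots, c_{2n}$ can be chosen freely and all $c_{2k}$ for $k > n$ are then determined — this is exactly where the super-dimension $M = m-2n$ and the vanishing $\uxb^{2i}\cH^f_0 = 0$ for $i>n$ enter, truncating the fermionic tower at degree $2n$ and making the higher constraints non-degenerate (the Gamma-function factors in lemma~\ref{laplonpieces}, with $M\notin -2\mN$, guarantee the relevant coefficients are nonzero). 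Counting $c_0,\dots,c_{2n}$ gives dimension $n+1$.

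The main obstacle is the last step: making precise the linear system relating the $c_{2k}$ and proving it has exactly an $(n+1)$-dimensional solution space. Concretely one must expand $T\big(x^2 (x^{2t}H_j)\big) = -T(x^{2t}H_j)$ for $H_j \in \cH_j$, reduce both sides using the Fischer projections $\mP_i^k$ of theorem~\ref{projectionsfischer} and lemma~\ref{laplonpieces}, and extract the recursion. The tricky bookkeeping is that $x^2 \cdot \ux^{2a}\uxb^{2b}$ mixes bosonic and fermionic powers, so the ``harmonic part'' of $x^{2} f_{k/2,0,0}$ is not simply $f_{(k+2)/2,0,0}$ but a combination of $f_{(k+2)/2,0,0}$ and lower $f_{\cdot,0,0}$'s; one needs the explicit coefficient of $\ux^{2k}$ recorded in lemma~\ref{polythm} to see that the recursion degenerates exactly past degree $2n$. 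I would handle this by writing $T$ on $\cP$ in the basis adapted to theorem~\ref{completedecomp}, checking property~2 is automatic, and then showing property~1 is equivalent to a bidiagonal (hence explicitly solvable) system in the $c_{2k}$ whose kernel, after the fermionic truncation at $2n$, has dimension $n+1$.
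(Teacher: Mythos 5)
Your reduction to $G$-invariant functionals on $\bigoplus_k\cH_k$ via the Fischer decomposition and Schur's Lemma is exactly the paper's strategy, but the second half of your argument goes wrong at a decisive point. The line $f_{k/2,0,0}\,\cH^b_0\otimes\cH^f_0$ does \emph{not} sit inside $\cH_k$ for every even $k$: lemma \ref{polythm} only produces $f_{l,0,0}$ for $l\le n$ (for $l>n$ any candidate invariant would involve $\uxb^{2i}$ with $i>n$, which vanishes identically), and in the decomposition of theorem \ref{completedecomp} the index $l$ is capped by $\min(n-j,\lfloor\frac{k-j}{2}\rfloor)$, so $\cH_{2l}$ with $l>n$ contains \emph{no} one-dimensional summand at all. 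Hence property 2 together with Schur already forces $T$ to vanish on $\cH_{2l}$ for $l>n$; the count $n+1$ comes directly from the fact that the trivial summands occur exactly in $\cH_0,\cH_2,\dots,\cH_{2n}$, not from any interplay with property 1. Your premise ``one free scalar $c_k$ for each even $k$, to be cut down by property 1'' is therefore false, and so is the mechanism you propose to repair it.

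Moreover, the recursion you hope to extract from property 1 does not exist. For $H\in\cH_k$ the element $x^2H$ lies entirely in the single Fischer summand $x^2\cH_k\subset\cP_{k+2}$ (theorem \ref{fdecomp}); it has no harmonic component of degree $k+2$, so $T(x^2 f_{k/2,0,0})=-T(f_{k/2,0,0})$ relates nothing across different degrees. In fact property 1 is equivalent to the prescription $T(x^{2j}H_k)=(-1)^jT(H_k)$, i.e. it uniquely \emph{extends} an arbitrary $G$-invariant functional on $\bigoplus_k\cH_k$ to all of $\cP$ and imposes no linear relations among the values on the harmonic spaces --- this is also how the paper arrives at the explicit general form of $T$ as a sum of projections with free coefficients $a_0,\dots,a_n$. (Your intuition about mixing comes from writing $x^2=\ux^2+\uxb^2$ and decomposing bosonically, but it is the super Fischer decomposition, not the bosonic one, that is relevant for property 1.) To fix the proof, drop the recursion step, observe that the one-dimensional summands in (\ref{decompintoirreps}) are exactly $f_{i,0,0}\,\cH^b_0\otimes\cH^f_0\subset\cH_{2i}$ for $i=0,\dots,n$, and then verify that arbitrary values $a_i$ on these lines are realizable via the extension just described; the paper does this by exhibiting the projection formula (\ref{generalintegral}).
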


In section \ref{distinctionintegrals} we will determine a way to distinguish between these different types of integration.

We first prove the following lemma, which will be crucial for the further development.

\begin{lemma}
An integral $T$ over the supersphere of a function $f \in \cH$, with $\cH$ a subspace of $\cP$ of dimension $\dim \cH > 1$, irreducible under the action of $SO(m) \times Sp(2n)$, is always zero.
\label{schur}
\end{lemma}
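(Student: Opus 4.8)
The plan is to apply Schur's Lemma in its linear-functional form. Fix an irreducible subspace $\cH \subseteq \cP$ under the action of $G = SO(m) \times Sp(2n)$ with $\dim \cH > 1$. Restricting the integral $T$ to $\cH$ gives a linear functional $T|_{\cH}: \cH \rightarrow \mR$. By property 2 of Definition \ref{defintegral}, this functional is $G$-invariant: $T(g \cdot f) = T(f)$ for all $f \in \cH$ and $g \in G$. Equivalently, $T|_{\cH}$ lies in the space $\mathrm{Hom}_G(\cH, \mR)$ of $G$-equivariant maps from $\cH$ to the trivial representation $\mR$.

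The core of the argument is then that $\mathrm{Hom}_G(\cH, \mR) = 0$ whenever $\cH$ is irreducible and not isomorphic to the trivial representation. Since $\cH$ is irreducible of dimension greater than $1$, it cannot be the trivial one-dimensional representation, so any nonzero $G$-equivariant map $\cH \to \mR$ would have kernel a proper $G$-invariant subspace of $\cH$ --- contradicting irreducibility unless the kernel is all of $\cH$, i.e. unless the map is zero. (Alternatively, the kernel of any nonzero such functional is a codimension-one invariant subspace, which again contradicts irreducibility since $\dim\cH>1$.) Hence $T|_{\cH} = 0$, which is exactly the claim.

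A small technical point worth addressing is that Schur's Lemma in the cleanest form is usually stated over an algebraically closed field, whereas here we work over $\mR$. However, the version needed --- that an irreducible module admits no nonzero equivariant map to the trivial module unless it \emph{is} the trivial module --- holds over any field: it follows purely from the fact that the kernel of such a map is a submodule. No appeal to $\mathrm{End}_G(\cH) = \mR$ is required, so there is no subtlety coming from real versus complex scalars. One should also note that the group $SO(2)$ / abelian-factor issue mentioned in Section 4 does not cause trouble: even when a real irreducible $\cH$ is two-dimensional (as for the $SO(2)$ spherical harmonics), it is still not the trivial representation, so the argument applies verbatim.

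The main obstacle, if any, is not conceptual but a matter of bookkeeping: one must make sure the ``action of $G$ on $\cP$'' used in Definition \ref{defintegral} is genuinely a group representation for which the irreducible summands appearing in Theorem \ref{completedecomp} are the relevant irreducible pieces, so that the phrase ``irreducible under the action of $SO(m)\times Sp(2n)$'' in the lemma matches the representation-theoretic notion used in Schur's Lemma. Once that identification is in place --- and it is, given the setup in Section 3 and the decomposition in Section 6 --- the proof is the one-line Schur argument above. In the next section this lemma will be combined with the Fischer decomposition (Theorem \ref{fdecomp}) and the decomposition of $\cH_k$ (Theorem \ref{completedecomp}) to conclude that $T$ is determined by its values on the finitely many trivial (one-dimensional) summands, yielding Theorem \ref{intdim}.
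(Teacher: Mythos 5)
Your proposal is correct and follows essentially the same route as the paper: restrict $T$ to the irreducible subspace, observe that its kernel is a $G$-invariant subspace which by irreducibility must be $0$ or all of $\cH$, and since $\dim\cH>1$ forces the kernel to be nonzero, conclude $T|_{\cH}=0$. The additional remarks on working over $\mR$ and on the $SO(2)$ case are sound but not needed beyond the paper's one-line Schur argument.
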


\begin{proof}
If $T: \cH \rightarrow \mR$ is a linear functional satisfying the requirements of definition \ref{defintegral}, then we have that $\ker T$ is invariant under the action of $SO(m) \times Sp(2n)$. As $\ker T$ is a subspace of $\cH$, we have that either $\ker T =0$ or $\ker T = \cH$, due to the irreducibility of $\cH$. As moreover $\dim (\ker T) >0$ we have that $\ker T = \cH$ and the lemma follows. (Note that this is just an application of Schur's lemma.) 
\end{proof}

Now we will determine all possible integrations over $\cP$. Recall that we have the following Fischer decomposition
\[
\cP = \bigoplus_{j=0}^{\infty} \bigoplus_{k=0}^{\infty}  x^{2j} \cH_{k}
\]
so we have for a general integral $T$ that
\[
T(\cP) = \sum_{j=0}^{\infty} \sum_{k=0}^{\infty} (-1)^j T(\cH_{k}).
\]
It hence suffices to know how integrations on $\cH_k$ look like. In theorem \ref{completedecomp} we have found that $\cH_k$ decomposes into irreducible pieces as
\begin{equation}
\cH_{k} = \bigoplus_{i=0}^{\min(n,k)} \cH^b_{k-i} \otimes \cH^f_{i} \;\; \oplus \;\; \bigoplus_{j=0}^{\min(n, k-1)-1} \bigoplus_{l=1}^{\min(n-j,\lfloor \frac{k-j}{2} \rfloor)} f_{l,k-2l-j,j} \cH^b_{k-2l-j} \otimes \cH^f_{j}.
\label{integrationsphericalharmonics}
\end{equation}

As a consequence of lemma \ref{schur}, we see that only the one-dimensional summands in the decomposition of $\cH_k$ can give rise to an integration which is not zero. From formula (\ref{integrationsphericalharmonics}) we conclude that there are exactly $n+1$ of these summands, namely
\[
f_{i,0,0} \; \cH_0^b \otimes \cH_0^f \;\; \subset \;\; \cH_{2i}, \quad i=0,\ldots,n.
\]
On each of them, the value of the integral $T$ can be freely chosen as there are no further restrictions. We denote these chosen values by $a_i \in \mR$, $i = 0, \ldots, n$.

Hence we have reduced the problem under consideration to constructing projections of elements of $\cP$ on these one-dimensional irreducible pieces. The general form of an integration over the supersphere, satisfying definition \ref{defintegral}, can then be found as follows.

Suppose we are given a polynomial $R \in \cP$, then we perform the following projections
\[
\xymatrix{R \ar[rrr]^{\mP_{2k}} &&& \cP_{2k} \ar[rrr]^{ x^{2k-2i}\mP_{2k-2i}^{2k}}&&& x^{2k-2i} \cH_{2i} \ar[rrr]^{\mP_{2i}^{bf}}&&&f_{i,0,0} \cH_0^b \otimes \cH_0^f}
\]
with
\begin{itemize}
\item $\mP_{2k}$ the projection onto the space of homogeneous polynomials of degree $2 k$
\item $\mP_{2k-2i}^{2k}$ the projection from the space of homogeneous polynomials of degree $2 k$ to the space of spherical harmonics of degree $2i$ (Fischer decomposition)
\item $\mP_{2i}^{bf}$ the projection from the space of spherical harmonics of degree $2i$ to its unique one-dimensional irreducible subspace $f_{i,0,0} \cH_0^b\otimes \cH_0^f$.
\end{itemize}

These projections have to be done for all values of $k$ and for $i = 0, \ldots , n$. Moreover, we only have to consider projections on these pieces, as all the other components of $R$ are elements of irreducible subspaces of dimension larger than one and hence, by lemma \ref{schur}, do not contribute to the integral.

Summarizing, we arrive at the following general form for an integral on the supersphere:
\begin{equation}
T = \sum_{i=0}^{n} \frac{a_i}{f_{i,0,0}} \sum_{k=i}^{\infty} (-1)^{k-i} \mP_{2i}^{bf} \mP_{2k-2i}^{2k} \mP_{2k}.
\label{generalintegral}
\end{equation}

The factor $(-1)^{k-i}$ stems from the fact that $x^{2k-2i} \cH_{2i}$ equals $(-1)^{k-i}\cH_{2i}$ on the supersphere $x^2=-1$.

As there are exactly $n+1$ one-dimensional irreducible subspaces $f_{i,0,0} \; \cH_0^b \otimes \cH_0^f$, and thus $n+1$ values $a_i$ to be chosen, this also proves theorem \ref{intdim}.

Let us now find explicit formulae for the projection operators. The operators $\mP_{2k-2i}^{2k}$ follow from the Fischer decomposition (see theorem \ref{projectionsfischer}). Next we construct the operator $\mP_{2i}^{bf}$. This operator is the projection
\[
\mP_{2i}^{bf}: \cH_{2i} \longrightarrow f_{i,0,0} \cH_0^b \otimes \cH_0^f.
\]

It is immediately clear that $\Delta_b^i$ annihilates all terms in the decomposition of $\cH_{2i}$ except for the term $f_{i,0,0} \cH_0^b \otimes \cH_0^f$. Indeed, we have ($i >0$)
\begin{eqnarray*}
\Delta_b^i (f_{i,0,0} \cH_0^b \otimes \cH_0^f) &=& \Delta_b^i (f_{i,0,0})\cH_0^b \otimes \cH_0^f\\
&=&  \frac{n!}{\Gamma(\frac{m}{2} + i)} \Delta_b^i(\ux^{2i})\cH_0^b \otimes \cH_0^f\\
&=& \frac{n!}{\Gamma(\frac{m}{2} + i)} 2^{2i} i! \frac{\Gamma(\frac{m}{2} + i)}{\Gamma(\frac{m}{2})}\cH_0^b \otimes \cH_0^f\\
&=&\frac{n!i!2^{2i}}{\Gamma(\frac{m}{2})}\cH_0^b \otimes \cH_0^f.
\end{eqnarray*}

So the operator $\mP_{2i}^{bf}$ takes the following form
\[
\mP_{2i}^{bf} = \frac{\Gamma(\frac{m}{2})}{n!i!2^{2i}} f_{i,0,0} \Delta_b^i.
\]

Note that in the case where $i=0$ no projection is necessary ($\mP_{0}^{bf} = 1$) as $\cH_0$ is already one-dimensional.

A general integral $T$ over the supersphere can now also be written as
\[
T = \sum_{i=0}^n a_i \int_i
\]
with 
\[
\int_i = \sum_{k=i}^{\infty} \frac{(-1)^{k-i}}{f_{i,0,0}} \mP_{2i}^{bf} \mP_{2k-2i}^{2k} \mP_{2k}
\]
such that 
\[
\int_i f_{j,0,0} = \delta_{ij}.
\]

In the next section we will construct some explicit examples.

\section{Some examples}

\subsection{The Pizzetti case}

This integral is defined by putting $a_0 \neq 0$ and $a_i =0$, $i>0$. When explicitly writing the proper projections we arrive at the following

\begin{definition}[Pizzetti]
The integral of $R\in \cP$ over the supersphere is given by
\begin{equation}
\int_{SS} R =  \sum_{k=0}^{\infty} (-1)^k \frac{2 \pi^{M/2}}{4^{k} k!\Gamma(k+M/2)} (\Delta^{k} R)(0)
\label{defintss}
\end{equation}
where $(\Delta^{k} R)(0)$ means evaluating the result in $x_i = {x \grave{}}_i =0$.
\end{definition}

The normalization is chosen such that $\int_{SS}  1$ gives the area of the sphere in the purely bosonic case.
The same formula was proven in the classical case by Pizzetti (see \cite{PIZZETTI}).

Combining this formula with the concept of integration in spherical co-ordinates yields the Berezin integral (see \cite{DBS5} and \cite{MR732126}), which is the standard integral used in the study of superspaces. More precisely we have the following theorem.

\begin{theorem}
The integral of a function $f = R \, \exp(x^2)$ with $R$ an arbitrary super-polynomial is given by the following formula:
\begin{equation}
\int_{\mR^{m|2n}} f = \sum_{k=0}^{\infty} (-1)^k \frac{\pi^{M/2}}{4^{k} k!} (\Delta^{k} R )(0) =\pi^{M/2}(\exp(-\Delta/4))R(0).
\label{superintegral}
\end{equation}
Moreover this integral is equivalent with the Berezin integral:
\[
\int_{\mR^{m|2n}} f = \pi^{-n}  \int_{\mR^m} dV(\ux) \partial_{{x \grave{}}_{2n}} \ldots \partial_{{x \grave{}}_{1}} f
\]
where $dV(\ux)$ is the standard Lebesgue measure in $\mR^m$.
\end{theorem}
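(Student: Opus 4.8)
The plan is to establish the two assertions separately. The series formula (and its operator form $\pi^{M/2}\exp(-\Delta/4)R(0)$) comes from feeding the Pizzetti formula (\ref{defintss}) into integration in spherical coordinates; the equivalence with the Berezin integral comes from a bosonic--fermionic factorization together with the classical Gaussian moment formula and a purely fermionic computation.

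\textbf{The series formula.} I would recall from \cite{DBS5} that $\int_{\mR^{m|2n}}$ of $f=R\exp(x^2)$ is carried out in spherical coordinates, $\int_{\mR^{m|2n}}f=\int_0^{+\infty}r^{M-1}\bigl(\int_{SS}f\bigr)\,dr$, where on the supersphere $x^2=-r^2$, so $\exp(x^2)$ becomes $\exp(-r^2)$ and a homogeneous component $R_k\in\cP_k$ contributes $r^kR_k$. Writing $R=\sum_kR_k$ and applying (\ref{defintss}), one finds $\int_{SS}R_k=0$ for $k$ odd, while $\int_{SS}R_{2j}=(-1)^j\frac{2\pi^{M/2}}{4^jj!\,\Gamma(j+M/2)}(\Delta^jR_{2j})(0)$, since $(\Delta^jR_k)(0)\neq0$ forces $k=2j$. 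Combining with $\int_0^{+\infty}r^{M-1+2j}e^{-r^2}\,dr=\tfrac12\Gamma(M/2+j)$, the $\Gamma$-factors cancel, leaving $\sum_j(-1)^j\frac{\pi^{M/2}}{4^jj!}(\Delta^jR_{2j})(0)=\sum_j(-1)^j\frac{\pi^{M/2}}{4^jj!}(\Delta^jR)(0)$ (the other homogeneous pieces of $R$ vanish at $0$ after $\Delta^j$); recognizing the exponential series rewrites this as $\pi^{M/2}(\exp(-\Delta/4)R)(0)$. The sum is finite since $R$ is a polynomial, and $M\notin-2\mN$ keeps the $\Gamma$-values regular.

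\textbf{Equivalence with the Berezin integral.} Since $\cP=\mR[\ux]\otimes\Lambda_{2n}$, by linearity it suffices to treat $R=R^b(\ux)R^f$ with $R^b$ purely bosonic and $R^f$ purely fermionic. Then $f=\bigl(R^be^{\ux^2}\bigr)\bigl(R^fe^{\uxb^2}\bigr)$ is a product of commuting factors and the $\partial_{{x \grave{}}_i}$ pass through the bosonic one, so the Berezin integral splits as $\pi^{-n}\bigl(\int_{\mR^m}R^be^{\ux^2}\,dV(\ux)\bigr)\bigl(\partial_{{x \grave{}}_{2n}}\cdots\partial_{{x \grave{}}_1}(R^fe^{\uxb^2})\bigr)$. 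The bosonic factor is the classical Gaussian moment formula: since $\ux^2=-|\ux|^2$ and $\Delta_b=-\sum_j\pj^2$, one has $\int_{\mR^m}R^be^{\ux^2}\,dV=\pi^{m/2}(\exp(-\Delta_b/4)R^b)(0)$. The crux is the fermionic factor: one must show $\partial_{{x \grave{}}_{2n}}\cdots\partial_{{x \grave{}}_1}(R^fe^{\uxb^2})=(\exp(-\Delta_f/4)R^f)(0)$. For this I would expand $e^{\uxb^2}=\sum_p\uxb^{2p}/p!$, use the fermionic Fischer decomposition (\ref{fermionicfischer}) to write the degree-$2j$ part as $R^f_{2j}=\uxb^{2j}h_0+(\text{terms with lower powers of }\uxb^2)$ with $h_0\in\cH_0^f\cong\mR$, and note that multiplication by $\uxb^{2(n-j)}$ annihilates every term except $h_0\uxb^{2n}=h_0\,n!\,{x \grave{}}_1\cdots{x \grave{}}_{2n}$, on which $\partial_{{x \grave{}}_{2n}}\cdots\partial_{{x \grave{}}_1}$ evaluates to $h_0\,n!$. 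Using the $m=0$ case of Lemma \ref{iteratedlaplace} to get $\Delta_f^j(\uxb^{2j})=(-1)^j4^jj!\,n!/(n-j)!$ and $\Delta_f^jR^f_{2j}=h_0\,\Delta_f^j(\uxb^{2j})$ (the lower Fischer pieces are $\Delta_f$-null, hence killed by $\Delta_f^j$), one solves $h_0=\frac{(-1)^j(n-j)!}{4^jj!\,n!}(\Delta_f^jR^f)(0)$; the factorials telescope against the $1/(n-j)!$ from the exponential, leaving exactly $\sum_j\frac{(-1)^j}{4^jj!}(\Delta_f^jR^f)(0)$. Multiplying the three contributions, with $\pi^{-n}\pi^{m/2}=\pi^{M/2}$ and $\exp(-\Delta_b/4)\exp(-\Delta_f/4)=\exp(-\Delta/4)$ (the operators commute and act on disjoint variables), gives $\pi^{M/2}(\exp(-\Delta/4)(R^bR^f))(0)$, and linearity finishes the identification.

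\textbf{Main obstacle.} The delicate part is the fermionic step: matching the ordering and sign conventions of $\partial_{{x \grave{}}_{2n}}\cdots\partial_{{x \grave{}}_1}$ with the Berezin top-form normalization, and justifying carefully that $\uxb^{2(n-j)}$ wipes out all Fischer components of $R^f_{2j}$ below the leading one. A secondary point is the radial integral when $M<0$, where $\int_0^{+\infty}r^{M-1+2j}e^{-r^2}\,dr$ need not converge at the origin: one either restricts to the finitely many surviving $j$ and argues by analytic continuation in $M$, or simply takes the Pizzetti-plus-radial prescription of \cite{DBS5} as the definition, the final formula being manifestly regular for $M\notin-2\mN$.
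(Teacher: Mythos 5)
The paper itself contains no proof of this theorem --- it simply cites \cite{DBS5} --- but your argument correctly reconstructs exactly the derivation the paper alludes to: Pizzetti's formula (\ref{defintss}) combined with the radial integral $\int_0^{\infty}r^{M-1+2j}e^{-r^2}dr=\tfrac12\Gamma(M/2+j)$ cancels the $\Gamma$-factors and gives $\pi^{M/2}(\exp(-\Delta/4)R)(0)$, and the bosonic/fermionic factorization, the classical Gaussian moment formula, and the top-form identity $\partial_{{x \grave{}}_{2n}}\cdots\partial_{{x \grave{}}_{1}}\bigl(R^f e^{\uxb^2}\bigr)=(\exp(-\Delta_f/4)R^f)(0)$ (your constants $\Delta_f^j(\uxb^{2j})=(-1)^j4^jj!\,n!/(n-j)!$ and the telescoping against $1/(n-j)!$ are correct) yield the Berezin identification. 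Two small touch-ups: for fermionic degrees $2j>n$ the isolation of the $\uxb^{2j}\cH_0^f$-component requires the decomposition (\ref{fermfischer2}) rather than (\ref{fermionicfischer}), though the vanishing argument under multiplication by $\uxb^{2(n-j)}$ goes through unchanged; and the remaining Fischer pieces $\uxb^{2s}H_t$ with $t>0$ are not $\Delta_f$-null but are annihilated by $\Delta_f^j$ because $s<j$, which follows by iterating Lemma \ref{iteratedlaplace} (valid without any restriction on $M$, so the purely fermionic value $M=-2n$ causes no trouble).
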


\begin{proof}
See \cite{DBS5}.
\end{proof}

Note that in the resulting formula (\ref{superintegral}) the super-dimension $M$ does not appear, except in the scaling of the result.

In the following subsection we establish another functional leading to a possible new concept of integration in superspace.

\subsection{Another possibility}

We consider the simplest case (with exception of the Pizzetti integral).
This is the integral where $a_1 \neq 0$ but $a_i =0, i \neq 1$. Explicitly determining the projections in formula (\ref{generalintegral}) leads to

\begin{definition}
The integral of a superpolynomial $R$ over the supersphere is given by
\begin{equation}
\int_{1} R = c \sum_{k=1}^{\infty}  \frac{(-1)^{k-1}}{4^{k+1} (k-1)!\Gamma(k+M/2+1)} \left(\Delta_b( 2M \Delta^{k-1} - x^2 \Delta^k) R \right)(0).
\label{defintesch}
\end{equation}
\end{definition}

This can be further simplified to
\[
\int_{1} R = c \sum_{k=1}^{\infty}  \frac{(-1)^{k-1}}{4^{k+1} (k-1)!\Gamma(k+M/2+1)} \left( 2M \Delta_b\Delta^{k-1} - 2m \Delta^k) R \right)(0).
\]

In this definition $c$ is a constant which still has to be determined. If we take 
\[
c = \frac{\Gamma(2 + M/2)\Gamma(1 + m/2)}{m M n!}
\]
then the integral is normalized such that
\[
\int_1 f_{1,0,0} = 1.
\]

Again combining this idea with integration in spherical co-ordinates, we find the following formula for integration over the whole superspace (which is now not equivalent with the Berezin integral)
\[
\int_{1, \mR^{m | 2n}} R \exp(x^2)= \frac{c}{2} \sum_{k=1}^{\infty}  \frac{(-1)^{k-1}}{4^{k+1} (k-1)!(k+M/2)} \left(\Delta_b( 2M \Delta^{k-1} - x^2 \Delta^k) R \right)(0).
\]

In this case the super-dimension $M$ does not disappear in the resulting formula.

In a similar way one can construct the integrals corresponding to $a_j \neq 0$, $a_i =0 \, (i \neq j)$ explicitly.

\section{Distinction between the different types of integration}
\label{distinctionintegrals}

In the purely bosonic case ($n=0$), there is only one possibility for integration on the sphere, namely Pizzetti's formula, because there is only one one-dimensional space $\cH_0^b$. In this section we will show how this particular possibility can be distinguished in general from the other ones.

We start with the following definition of orthogonality:

\begin{definition}
The space $\cH_k$ is orthogonal to $\cH_l$ ($k \neq l$), notation $\cH_k \; \bot \; \cH_l$, with respect to the integral $T$ over the supersphere if and only if
\[
T(\cH_k \cH_l) = 0 = T(\cH_l \cH_k).
\]
\end{definition}

We then have the following

\begin{theorem}
The Pizzetti integral over the supersphere is the only integral that has the property
\begin{equation}
k \neq l \qquad \Longrightarrow \qquad\cH_k \; \bot \; \cH_l.
\label{orthsphharm}
\end{equation}
\end{theorem}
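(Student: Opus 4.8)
The plan is to show that the orthogonality property pins down the coefficients $a_i$ in the general integral \eqref{generalintegral} up to overall scaling, and that the surviving solution is exactly the Pizzetti functional ($a_i=0$ for $i>0$). First I would note that $\cH_k\perp\cH_l$ for all $k\neq l$ is equivalent, via the Fischer decomposition and bilinearity, to the single family of conditions $T(\cH_k\,\cH_l)=0$ whenever $k\neq l$; and since $\cH_k\cH_l\subset\cP_{k+l}$ and $T$ kills every irreducible summand of dimension $>1$ (Lemma \ref{schur}), only the parts of $\cH_k\cH_l$ landing, after applying the projections of Section \ref{projoperators}, in one of the one-dimensional pieces $f_{i,0,0}\,\cH_0^b\otimes\cH_0^f\subset\cH_{2i}$ can contribute. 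So the orthogonality conditions reduce to: for each $k\neq l$ and each $i$, the projection of $x^{2\cdot(\text{something})}$-times-$\cH_k\cH_l$ onto $f_{i,0,0}\cH_0^b\otimes\cH_0^f$ is annihilated when paired with $a_i\int_i$.

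The cleanest route is to test orthogonality against a judiciously chosen pair of harmonics. The natural candidates are the one-dimensional pieces themselves: take $H_k\in\cH_k$ and $H_l\in\cH_l$ and compute $T(H_kH_l)$ using the expansion $T=\sum_i a_i\int_i$ together with the explicit projection formula $\mP_{2i}^{bf}=\frac{\Gamma(m/2)}{n!\,i!\,2^{2i}}f_{i,0,0}\Delta_b^i$ and the Fischer projections from Theorem \ref{projectionsfischer}. In fact the sharpest test uses purely bosonic harmonics: for $H_p^b\in\cH_p^b$, $H_q^b\in\cH_q^b$ with $p\neq q$, the product $H_p^bH_q^b$ is a bosonic polynomial of degree $p+q$, and classically (Pizzetti's theorem in $\mathbb R^m$, \cite{PIZZETTI}) one already knows $\int_{SS}(H_p^bH_q^b)=0$. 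The point is to show that the $i>0$ functionals $\int_i$ do \emph{not} all vanish on such products, so that turning on any $a_i\neq 0$ ($i>0$) destroys orthogonality. Concretely I would exhibit one explicit pair $(k,l)$ and one index $i\in\{1,\dots,n\}$ for which $\int_i(\cH_k\cH_l)\neq 0$ while $\int_0(\cH_k\cH_l)=\int_j(\cH_k\cH_l)=0$ for the relevant other $j$'s — for instance, compare $\cH_0$ with $\cH_2$, where $\cH_2$ contains $f_{1,0,0}\cH_0^b\otimes\cH_0^f\subset\cH_2$; then $T(1\cdot f_{1,0,0})=a_1$ by the normalization $\int_i f_{j,0,0}=\delta_{ij}$, and $\cH_0\perp\cH_2$ forces $a_1=0$. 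Iterating: having $\cH_0\perp\cH_{2i}$ and $f_{i,0,0}\in\cH_{2i}$ gives $T(f_{i,0,0})=a_i$, hence orthogonality of $\cH_0$ with every $\cH_{2i}$, $i\geq1$, already forces $a_1=\dots=a_n=0$, leaving only $a_0\neq0$, i.e. the Pizzetti integral. Conversely one must check that Pizzetti's integral genuinely satisfies \eqref{orthsphharm}: this is immediate from \eqref{defintss}, since for $H_k\in\cH_k$, $H_l\in\cH_l$ with $k\neq l$ the product lies in $\cP_{k+l}$ with $k+l$ and $|k-l|$ of the same parity, and $(\Delta^j(H_kH_l))(0)$ is nonzero only for $2j=k+l$; but then the resulting harmonic-projection argument (or directly the reproducing property of Pizzetti on $\mathbb R^{m|2n}$ established in \cite{DBS5}) shows it vanishes — alternatively, invoke that Pizzetti is the Berezin integral and use Green's theorem / the $\mathfrak{sl}_2$-structure to get $\int_{SS}(H_kH_l)= c\,\langle H_k,H_l\rangle=0$ by Fischer-orthogonality.

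So the skeleton is: (1) reduce orthogonality to conditions on the one-dimensional summands via Lemma \ref{schur}; (2) use $T(f_{i,0,0})=a_i$ and $f_{i,0,0}\in\cH_{2i}$ to see that $\cH_0\perp\cH_{2i}$ forces $a_i=0$ for $i\geq1$; (3) conclude only Pizzetti survives; (4) verify Pizzetti itself is orthogonal, citing \cite{DBS5} or arguing directly from \eqref{defintss}. The main obstacle is step (4) done from scratch — showing $\int_{SS}(H_kH_l)=0$ for all harmonic $H_k,H_l$ of distinct degree in the \emph{super} setting — but this is already available in \cite{DBS5} (it is one of the listed consequences of the Pizzetti construction there, via Green's theorem), so I would simply cite it. A secondary subtlety is step (2): one must make sure that $T(1\cdot f_{i,0,0})$ really isolates $a_i$ with no contamination from lower $a_j$, which follows from $\int_j f_{i,0,0}=\delta_{ij}$ together with the observation that $f_{i,0,0}\in\cH_{2i}$ is genuinely harmonic (Lemma \ref{polythm}) so that $\mP_{2j-2j'}^{2j'}$ picks it out cleanly.
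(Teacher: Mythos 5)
Your proposal is correct and follows essentially the same route as the paper: for the converse you cite \cite{DBS5} (theorem 4), and for the direct part you test the functional against $f_{i,0,0}\in\cH_{2i}$ viewed as the product $1\cdot f_{i,0,0}$ with $1\in\cH_0$, using $\int_j f_{i,0,0}=\delta_{ij}$ to conclude that orthogonality forces $a_i=0$ for $i\geq 1$ — the paper does exactly this, merely phrasing it via the maximal index $t$ with $a_t\neq 0$.
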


\begin{proof}
The fact that the Pizzetti integral satisfies (\ref{orthsphharm}) is proven in \cite{DBS5}, theorem 4. 

Conversely, a general integral on the supersphere has the following form:
\[
\int = \sum_{i=0}^{n} \frac{a_i}{f_{i,0,0}} \sum_{k=i}^{\infty} (-1)^{k-i} \mP_{2i}^{bf} \mP_{2k-2i}^{2k} \mP_{2k}.
\]
If it is not the Pizzetti integral, then there exists a maximal $t >0$ such that $a_t \neq 0$. Then one has that
\[
\int f_{t,0,0} = a_t \neq 0
\]
which immediately implies that $\cH_0$ is not orthogonal to $\cH_{2t}$.
\end{proof}

One can even go a step further. Not only the spaces $\cH_k$ are mutually orthogonal with respect to the Pizzetti integral, but in fact all irreducible pieces (see theorem \ref{completedecomp}) are mutually orthogonal. This is summarized in the following theorem. The proof is essentially a reduction to either the purely bosonic or the purely fermionic case, treated in \cite{DBS5}.

\begin{theorem}
If $(i,p,q) \neq (j,r,s)$, one has that 
\[
f_{i,p,q} \cH^b_{p} \otimes \cH^f_{q} \quad  \bot \quad f_{j,r,s} \cH^b_{r} \otimes \cH^f_{s}
\]
with respect to the Pizzetti integral.
\end{theorem}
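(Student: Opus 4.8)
The plan is to reduce the orthogonality statement to the two already-known extreme cases (purely bosonic and purely fermionic) by exploiting the product structure $f_{i,p,q}\,\cH^b_p\otimes\cH^f_q$ together with the fact that the Pizzetti integral $\int_{SS}$ factors, in a suitable sense, through the bosonic and fermionic Laplacians. First I would note that, since $\int_{SS}$ is built from iterated applications of $\Delta=\Delta_b+\Delta_f$ evaluated at the origin, and since $\Delta_b$ acts only on commuting variables while $\Delta_f$ acts only on anticommuting variables, the integral of a product $P(\ux)\,Q(\uxb)$ of a purely bosonic and a purely fermionic polynomial splits as a product of a bosonic Pizzetti-type integral of $P$ and a fermionic one of $Q$ (up to the combinatorial normalization coming from the binomial expansion of $(\Delta_b+\Delta_f)^k$). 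This is exactly the computation already carried out in \cite{DBS5}, and I would cite it; it is the technical backbone.

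Next I would take two irreducible pieces $f_{i,p,q}\cH^b_p\otimes\cH^f_q$ and $f_{j,r,s}\cH^b_r\otimes\cH^f_s$ with $(i,p,q)\neq(j,r,s)$, pick arbitrary elements $H^b_p H^f_q$ and $H^b_r H^f_s$ of the respective (bosonic times fermionic) harmonic spaces, and expand the product $f_{i,p,q} f_{j,r,s}$ in powers $\ux^{2a}\uxb^{2b}$. Then $f_{i,p,q}\,H^b_p H^f_q \cdot f_{j,r,s}\,H^b_r H^f_s$ is a finite sum of terms of the form $c_{ab}\,\ux^{2a}H^b_p H^b_r \cdot \uxb^{2b}H^f_q H^f_s$. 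Applying the splitting of $\int_{SS}$, each such term contributes a bosonic factor $\int_{S^{m-1}}\ux^{2a}H^b_p H^b_r$ times a fermionic factor (the Pizzetti-type functional applied to $\uxb^{2b}H^f_q H^f_s$). By the orthogonality of spherical harmonics of different degree in $\mR^m$ (classical Pizzetti, \cite{PIZZETTI, DBS5}) the bosonic factor vanishes unless $p=r$; likewise the fermionic factor vanishes unless $q=s$, using the purely fermionic orthogonality established in \cite{DBS5}. So we are reduced to the case $p=r$, $q=s$, $i\neq j$.

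In that remaining case the two pieces live in $\cH_{2i+p+q}$ and $\cH_{2j+p+q}$ respectively — spherical harmonics of different total degree — so $f_{i,p,q}\cH^b_p\otimes\cH^f_q\subset\cH_{2i+p+q}$ and $f_{j,p,q}\cH^b_p\otimes\cH^f_q\subset\cH_{2j+p+q}$ with $2i+p+q\neq 2j+p+q$. Hence $\cH_k\perp\cH_l$ for $k\neq l$ with respect to Pizzetti, which is \cite{DBS5}, theorem 4 (also recalled above), finishes it. I would organize the write-up so that the bosonic/fermionic split is invoked once as a lemma-by-citation, then the three cases ($p\neq r$; $q\neq s$; $p=r,q=s,i\neq j$) are dispatched in a couple of lines each.

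The main obstacle I anticipate is making the "split" step fully rigorous rather than heuristic: one must check that the normalization constants in $\int_{SS}$ (the factors $2\pi^{M/2}/(4^k k!\,\Gamma(k+M/2))$, with $M=m-2n$) interact correctly with the binomial coefficients from $(\Delta_b+\Delta_f)^k$ so that the functional really does factor as (bosonic Pizzetti in dimension $m$) $\otimes$ (fermionic Pizzetti in "dimension" $-2n$), including the shifts $m\to m+2p$, $n\to n-q$ coming from lemma \ref{polythm} and lemma \ref{iteratedlaplace}. Since the statement in the excerpt says explicitly that "the proof is essentially a reduction to either the purely bosonic or the purely fermionic case, treated in \cite{DBS5}", I would lean on that reference for this bookkeeping rather than redo it, and keep the argument at the level of the reduction scheme above.
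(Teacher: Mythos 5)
Your proposal is correct and follows essentially the same route as the paper's proof: expand $f_{i,p,q}f_{j,r,s}$ in powers $\ux^{2a}\uxb^{2b}$, observe that on such a bihomogeneous term only one summand of the binomial expansion of $\Delta^k=(\Delta_b+\Delta_f)^k$ survives at the origin, strip the factors $\ux^{2a}$, $\uxb^{2b}$ via formula (\ref{relationslaplace}), and conclude with the purely bosonic and purely fermionic orthogonality of \cite{DBS5}, theorem 4. Note only that your vanishing argument never needs the literal factorization of $\int_{SS}$ into a product of bosonic and fermionic Pizzetti integrals (which fails at the level of normalizations, since $\Gamma(a+b+M/2)$ does not split); it suffices that on a bihomogeneous product the integral is a constant multiple of $(\Delta_b^{a}P)(0)\,(\Delta_f^{b}Q)(0)$, which is exactly how the paper phrases this step.
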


\begin{proof}
It is only necessary to prove this for the irreducible pieces contained in the same $\cH_k$. So we prove that
\[
f_{i,k-2i-p,p} \cH^b_{k-2i-p} \otimes \cH^f_{p} \quad  \bot \quad f_{j,k-2j-q,q} \cH^b_{k-2j-q} \otimes \cH^f_{q}
\]
with either $p \neq q$ or $p=q$, $i \neq j$. Due to the definition of the Pizzetti integral (formula (\ref{defintss})) it suffices to prove that
\[
\Delta^k (f_{i,k-2i-p,p} \cH^b_{k-2i-p} \otimes \cH^f_{p} \; f_{j,k-2j-q,q} \cH^b_{k-2j-q} \otimes \cH^f_{q}) =0.
\]

As we also have that
\begin{eqnarray*}
f_{i,k-2i-p,p} &=& \sum_{s=0}^i a_s \ux^{2s} \uxb^{2i-2s} \\
f_{j,k-2j-q,q} &=& \sum_{t=0}^j b_t \ux^{2t} \uxb^{2j-2t}
\end{eqnarray*}
it is sufficient to consider a term of the form
\begin{eqnarray*}
\Delta^k \left( \ux^{2s}\uxb^{2i-2s} \cH^b_{k-2i-p} \otimes \cH^f_{p} \ux^{2t} \uxb^{2j-2t} \cH^b_{k-2j-q} \otimes \cH^f_{q} \right)\\
=\Delta^k \left( \ux^{2s+2t} \cH^b_{k-2i-p} \cH^b_{k-2j-q}   \uxb^{2i+2j-2s-2t}  \cH^f_{p} \cH^f_{q} \right).
\end{eqnarray*}

Now $\Delta^k$ can be expanded as
\[
\Delta^k = \sum_{u=0}^{k} c_u \Delta_b^{k-u} \Delta_f^u.
\]

If $p+q$ is odd, then all terms vanish; if $p+q$ is even there remains exactly one term, namely where $2u = 2i+2j-2s-2t+p+q$. We obtain
\begin{eqnarray*}
\Delta_b^{k-u} \left( \ux^{2s+2t} \cH^b_{k-2i-p} \cH^b_{k-2j-q} \right) \Delta_f^{u} \left(  \uxb^{2i+2j-2s-2t}  \cH^f_{p} \cH^f_{q} \right)\\
= \mbox{constant} \times \Delta_b^{k-i-j-\frac{p+q}{2}} \left( \cH^b_{k-2i-p} \cH^b_{k-2j-q} \right) \Delta_f^{\frac{p+q}{2}} \left(    \cH^f_{p} \cH^f_{q} \right),
\end{eqnarray*}
where we have used formula (\ref{relationslaplace}) in the second line.

If $p \neq q$, the second term is always zero (apply \cite{DBS5}, theorem 4 in the purely fermionic case); if $p=q$ then $i \neq j$ and the first term is always zero (apply \cite{DBS5}, theorem 4 in the purely bosonic case). This shows that both spaces are indeed orthogonal.
\end{proof}

We can now summarize the previous results in the following theorem:

\begin{theorem}
If $M \not \in -2\mN$, the only linear functional $T: \cP \rightarrow \mR$ satisfying the following properties for all $f(x) \in \cP$:
\begin{itemize}
\item $T(x^2 f(x)) = - T(f(x))$
\item $T(f(g \cdot x)) = T(f(x))$, \quad $\forall g \in SO(m)\times Sp(2n)$
\item $k \neq l \quad \Longrightarrow \quad T(\cH_k \cH_l) = 0 = T(\cH_l \cH_k)$
\item $T(1) = \dfrac{2 \pi^{M/2}}{\Gamma(M/2)}$,
\end{itemize}
is given by the Pizzetti integral
\[
\int_{SS}R =  \sum_{k=0}^{\infty} (-1)^k \frac{2 \pi^{M/2}}{4^{k} k!\Gamma(k+M/2)} (\Delta^{k} R)(0).
\]
\end{theorem}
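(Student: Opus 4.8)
The plan is to read this theorem off as a synthesis of the structural results already in place, so that no genuinely new argument is needed. First I would note that the first two bulleted properties are exactly properties 1 and 2 of definition \ref{defintegral}, so $T$ is an integration over the supersphere in that sense; Theorem \ref{intdim} together with the explicit description (\ref{generalintegral}) then shows that $T=\sum_{i=0}^{n}a_i\int_i$ for unique constants $a_i\in\mR$, normalised by $\int_i f_{j,0,0}=\delta_{ij}$.

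Next I would bring in the orthogonality hypothesis (the third bulleted property), arguing as in the proof of the theorem characterising the Pizzetti integral through (\ref{orthsphharm}). Since $f_{t,0,0}\in\cH_{2t}$ and $1\in\cH_0$, one has $T(1\cdot f_{t,0,0})=a_t$, so if $a_t\neq 0$ for some $t>0$ then $\cH_0\not\perp\cH_{2t}$, contradicting property 3. Hence $a_i=0$ for all $i>0$ and $T=a_0\int_0$. Because the Pizzetti functional $\int_{SS}$ of (\ref{defintss}) is, up to its normalising scalar, precisely the functional of form (\ref{generalintegral}) supported on the single index $i=0$, it follows that $T$ is a scalar multiple of $\int_{SS}$.

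Finally I would fix that scalar using the fourth property. Evaluating (\ref{defintss}) at $R=1$ annihilates every term with $k\geq 1$ (because $\Delta^k 1=0$) and leaves $\int_{SS}1=2\pi^{M/2}/\Gamma(M/2)$, which is nonzero since $M\notin-2\mN$. Writing $T=c\int_{SS}$ and using the prescribed value $T(1)=2\pi^{M/2}/\Gamma(M/2)$ forces $c=1$, i.e.\ $T=\int_{SS}$.

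I do not expect a real obstacle here: all the substance lives in Sections 6, 7 and the first half of Section \ref{distinctionintegrals}, and the present statement merely repackages it. The only point that needs a little care is confirming that ``$\int_0$'' as defined in (\ref{generalintegral}) really is a nonzero scalar multiple of the Pizzetti functional (\ref{defintss}) --- which is immediate from the fact that $\int_{SS}$ is by definition the $a_0\neq 0$, $a_i=0$ $(i>0)$ case --- together with keeping in mind that the standing hypothesis $M\notin-2\mN$ is exactly what makes the Fischer decomposition (theorem \ref{fdecomp}) and the projection operators of Section \ref{projoperators} available, hence the whole reduction legitimate.
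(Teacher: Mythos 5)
Your proposal is correct and follows exactly the route the paper intends: the theorem is stated there as a summary of earlier results, and your synthesis (properties 1--2 give $T=\sum_i a_i\int_i$ via Theorem \ref{intdim} and (\ref{generalintegral}), property 3 kills $a_i$ for $i>0$ by testing on $1\cdot f_{t,0,0}$ as in the orthogonality-characterization theorem of Section \ref{distinctionintegrals}, and property 4 fixes the remaining scalar since $\Gamma(M/2)$ is finite and nonzero for $M\notin-2\mN$) is precisely that argument. No gaps.
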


\section*{Acknowledgement}
Part of this research was done during a research visit of the first two authors to Charles University, Prague. The first author acknowledges support by the institutional grant BOF06/BIL/012, the second author was supported by a Travel Grant of the Research Foundation - Flanders (FWO).

\end{document}